\newtheorem{lemma}{Lemma}[section]
\newtheorem{corollary}{Corollary}[section]
\newtheorem{definition}{Definition}[section]
\newcommand{\work}{work}
\newcommand{\wcost}{\omega}
\newcommand{\smallmem}{small-memory}
\newcommand{\largemem}{large-memory}
\newcommand{\depth}{depth}
\newcommand{\id}[1]{\ifmmode\mathit{#1}\else\textit{#1}\fi}
\newcommand{\const}[1]{\ifmmode\mbox{\textc{#1}}\else\textsc{#1}\fi}
\newcommand{\anp}{Asymmetric NP}
\newcommand{\ourprob}{DAG tracing problem}
\newcommand{\ouralgo}{DAG tracing algorithm}
\newcommand{\ourprop}{tracable property}
\newcommand{\initround}{initial round}
\newcommand{\incround}{incremental round}
\newcommand{\incsort}{\textsc{incrementalSort}}
\newcommand{\kdtree}{$k$-d tree}
\newcommand{\batched}[1]{$#1$-batched incremental construction}
\newcommand{\batchedshort}[1]{$#1$-batched}
\newcommand{\func}[1]{{\sc #1}}
\newcommand{\am}{{{\mathbb{AM}}}}
\newcommand{\augt}{{{\mathbb{AT}}}}
\newcommand{\tree}{{{\mathbb{T}}}}
\newcommand{\bool}{{\text{\bf bool}}}
\newcommand{\tournament}{{priority tournament tree}}
\newcommand{\augval}{{augmented value}}
\newcommand{\labeling}{$\alpha$-labeling}
\newcommand{\critical}{critical}
\newcommand{\secondary}{secondary}
\begin{document}
\title{Parallel Write-Efficient Algorithms and Data Structures for Computational Geometry\footnote{This paper is the full version of a paper at SPAA 2018 with the same name.}}

\date{}

\author{
  Guy E. Blelloch\\CMU\\guyb@cs.cmu.edu \and
  Yan Gu\\CMU\\yan.gu@cs.cmu.edu \and
  Julian Shun\\MIT CSAIL\\jshun@mit.edu \and
  Yihan Sun\\CMU\\yihans@cs.cmu.edu
}



\maketitle

\begin{abstract}

In this paper, we design parallel write-efficient geometric algorithms
that perform asymptotically fewer writes than standard algorithms for
the same problem.  This is motivated by emerging non-volatile memory
technologies with read performance being close to that of random
access memory but writes being significantly more expensive in terms
of energy and latency.  We design algorithms for planar Delaunay
triangulation, $k$-d trees, and static and dynamic augmented trees.
Our algorithms are designed in the recently introduced Asymmetric
Nested-Parallel Model, which captures the parallel setting in which
there is a small symmetric memory where reads and writes are unit cost
as well as a large asymmetric memory where writes are $\omega$ times
more expensive than reads.  In designing these algorithms, we
introduce several techniques for obtaining write-efficiency, including
DAG tracing, prefix doubling, reconstruction-based rebalancing and $\alpha$-labeling, which we believe
will be useful for designing other parallel write-efficient
algorithms.

\end{abstract}

\section{Introduction}

In this paper, we design a set of techniques and parallel algorithms
to reduce the number of writes to memory as compared to traditional
algorithms.  This is motivated by the recent trends in computer memory
technologies that promise byte-addressability, good read latencies,
significantly lower energy and higher density (bits per area) compared
to DRAM.  However, one characteristic of these memories is that
reading from memory is significantly cheaper than writing to it. Based
on projections in the literature, the asymmetry is between 5--40 in
terms of latency, bandwidth, or energy.  Roughly speaking, the reason
for this asymmetry is that writing to memory requires a change to the
state of the material, while reading only requires detecting the
current state.  This trend poses the interesting question of how to
design algorithms that are more efficient than traditional algorithms
in the presence of read-write asymmetry.


There has been recent research studying models and algorithms that
account for asymmetry in read and write costs~\cite{BT06, BBFGGMS16,
  BFGGS15, blelloch2016efficient, carson2016write, Chen11, Eppstein14,
  Gal05, jacob2017, ParkS09, Viglas12, Viglas14}.  Blelloch et
al.~\cite{BBFGGMS16, BFGGS15, blelloch2016efficient} propose models in
which writes to the asymmetric memory cost $\wcost{}\ge 1$ and all
other operations are unit cost.  The Asymmetric RAM
model~\cite{BBFGGMS16} has a small symmetric memory (a cache) that can
be used to hold temporary values and reduce the number of writes to
the large asymmetric memory.  The Asymmetric NP (Nested Parallel)
model~\cite{blelloch2016efficient} is the corresponding parallel
extension that allows an algorithm to be scheduled efficiently in
parallel, and is the model that we use in this paper to analyze
our algorithms.

Write-efficient parallel algorithms have been studied for many classes
of problems including graphs, linear algebra, and dynamic programming.
However, parallel write-efficient geometric algorithms have only been
developed for the 2D convex hull problem~\cite{BBFGGMS16}.
Achieving
parallelism (polylogarithmic depth) and optimal write-efficiency
simultaneously seems generally hard for many algorithms and data
structures in computational geometry.  Here, optimal write-efficiency
means that the number of writes that the algorithm or data structure
construction performs is asymptotically equal to the output size.  In
this paper, we propose two general frameworks and show how they can be
used to design algorithms and data structures from geometry with high
parallelism as well as optimal write-efficiency.

The first framework is designed for randomized incremental algorithms~\cite{CS89,Seidel93,Mulmuley94}.
Randomized incremental algorithms are relatively easy to implement in
practice, and the challenge is in simultaneously achieving high
parallelism and write-efficiency.  Our framework consists of two
components: a DAG-tracing algorithm and a prefix doubling technique.
We can obtain parallel write-efficient randomized incremental
algorithms by applying both techniques together.  The write-efficiency
is from the DAG-tracing algorithm, that given a current configuration
of a set of objects and a new object, finds the part of the
configuration that ``conflicts'' with the new object.  Finding $n$
objects in a configuration of size $n$ requires $O(n \log n)$ reads
but only $O(n)$ writes.  Once the conflicts have been found, previous
parallel incremental algorithms (e.g.~\cite{blelloch2016parallelism})
can be used to resolve the conflicts among objects taking linear reads
and writes.  This allows for a prefix doubling approach in which the
number of objects inserted in each round is doubled until all objects
are inserted.

Using this framework, we obtain parallel write-efficient algorithms for
comparison sort, planar Delaunay triangulation, and $k$-d trees, all
requiring optimal work, linear writes, and polylogarithmic depth.  The
most interesting result is for Delaunay triangulation (DT).  Although
DT can be solved in optimal time and linear writes sequentially using
the plane sweep method~\cite{blelloch2016efficient}, previous parallel
DT algorithms seem hard to make write-efficient.  Most are based on
divide-and-conquer, and seem to inherently require $\Theta(n \log n)$
writes.  Here we use recent results on parallel randomized incremental
DT~\cite{blelloch2016parallelism} and apply the above mentioned approach.
For comparison sort, our new algorithm is stand-alone (i.e., not based on other complicated algorithms like Cole's mergesort~\cite{Cole88,BFGGS15}).
For \kdtree{s}, we introduce the \batched{p} technique that maintains the balance of the tree while asymptotically reducing the number of writes.

The second framework is designed for augmented trees, including
interval trees, range trees, and priority search trees.  Our goal is
to achieve write-efficiency for both the initial construction as well as future
dynamic updates.  The framework consists of two
techniques.  The first technique is to decouple the tree construction
from sorting, and introduce parallel algorithms to construct the trees
in linear reads and writes after the objects are sorted (the sorting
can be done with linear writes~\cite{BFGGS15}).  Such algorithms
provide write-efficient constructions of these data structures, but
can also be applied in the rebalancing scheme for dynamic
updates---once a subtree is unbalanced, we reconstruct it.  The
second technique is \labeling{}.  We subselect some tree nodes as
\critical{} nodes, and maintain part of the augmentation only on these
nodes.  By doing so, we can limit the number of tree nodes that need to
be written on each update, at the cost of having to read more
nodes.\footnote{At a very high level, the \labeling{} is similar to
  the weight-balanced B-tree (WBB tree) proposed by Arge et
  al.~\cite{arge1999two,arge2003optimal}, but there are many
  differences and we discuss them in
  Section~\ref{sec:augtree}.}

Using this framework, we obtain efficient augmented trees in the asymmetric setting.
In particular, we can construct the trees in optimal work and writes, and polylogarithmic depth.
For dynamic updates, we provide a trade-off between
performing extra reads in queries and updates, while doing fewer writes on updates.
Standard algorithms use $O(\log n)$ reads and writes per update ($O(\log^2 n)$ reads on a 2D range tree).
We can reduce the number of writes by a factor of $\Theta(\log \alpha)$ for $\alpha\ge 2$, at a cost of increasing reads by at most a factor of $O(\alpha)$ in the worst case.
For example, when the number of queries and updates are about equal, we can improve the work by a factor of $\Theta(\log\wcost{})$, which is significant given that the update and query costs are only logarithmic.


The contributions of this paper are new parallel write-efficient
algorithms for comparison sorting, planar Delaunay triangulation,
$k$-d trees, and static and
dynamic augmented trees (including interval trees, range trees and
priority search trees).  We introduce two general
frameworks to design such algorithms, which we believe will be useful
for designing other parallel write-efficient algorithms.

\hide{
The first challenge is the complicatedness of many geometry algorithms, especially the parallel ones.
For instance, one of the simplest parallel algorithm for 2D Delaunay triangulation (DT) seems to be the incremental DT algorithm~\cite{blelloch2016parallelism}, although the analysis of that is already sophisticated.
An version that uses linear writes instead of $O(n\log n)$ writes is non-trivial due to the additional complexity by write-efficiency.
The second challenge is the varied combinations of problems and settings.
Taking the augmented trees for geometry problems as an example, there exist tens of various forms specifically designed for different queries and settings (e.g., online or offline).
Hence, it is possible to optimize the writes for each specific case, but that might not be the most efficient solution.
}

\section{Preliminaries}
\label{sec:prelim}

\subsection{Computation Models}

\myparagraph{Nested-parallel model}
The algorithms in this paper is based on the nested-parallel model where a computation starts and ends with a
single \defn{root} task.  Each task has a constant number of
registers, and runs a standard instruction set from a random access
machine, except it has one additional instruction called FORK.
The FORK instruction takes an integer $n'$ and creates $n'$ \defn{child}
tasks, which can run in parallel.  Child tasks get a copy of the
parent's register values, with one special register getting an integer
from $1$ to $n'$ indicating which child it is.  The parent task
suspends until all its children finish at which point it continues
with the registers in the same state as when it suspended, except the
program counter advanced by one.  In this paper we consider the computation that has
\defn{binary branching} (i.e., $n'=2$).  In the model, a computation
can be viewed as a (series-parallel) DAG in the standard way.  We
assume every instruction has a weight (cost).  The \defn{work} ($W$)
is the sum of the weights of the instructions, and the \defn{\depth{}}
($D$) is the longest (unweighted) path in this DAG.

\myparagraph{\anp{} (Nested Parallel) model}
We use the {\anp{} (Nested Parallel) model}~\cite{BBFGGMS16}, which is the asymmetric version of the nested-parallel model, to measure the cost of an algorithm in this paper.
The memory in the \anp{} model consists of (i) an
infinitely large \emph{asymmetric} memory (referred to as \largemem{}) accessible to all processors and
(ii) a small private \emph{symmetric} memory (\smallmem{}) accessible only to
one processor.
The cost of writing to large memory is
$\wcost$, and all other operations have unit cost.
The size of the \smallmem{} is measured in words.  In this paper, we assume the
small memory can store a logarithmic number of words, unless specified
otherwise.
A more precise and detailed definition of the \anp{} model is given in~\cite{GuThesis}.

The \defn{work} $W$ of a computation is the sum of the costs of the
operations in the DAG, which is similar to the symmetric version but
just has extra charges for writes.  The \defn{\depth{}} $D$ is still the longest
unweighted path in the DAG.
Under mild assumptions, a work-stealing scheduler can execute an algorithm with
work $W$ and \depth{} $D$ in $W / p + O(pD)$
expected time on a round-synchronous PRAM with $p$ processors~\cite{BBFGGMS16}.  We assume
concurrent-read, and concurrent-writes use priority-writes to resolve
conflicts.  In our algorithm descriptions, the number of \defn{writes}
refers only to the writes to the \largemem{}, and does not include
writes to the \smallmem{}.  All reads and writes are to words of size
$\Theta(\log n)$-bits for an input size of $n$.


\subsection{Write-Efficient Geometric Algorithms}
Sorting and searching are widely used in geometry applications.
Sorting requires $O(\wcost{}n+n\log n)$ work and $O(\log^2 n)$ depth~\cite{BFGGS15}.
Red-black trees with appropriate rebalancing rules require $O(\wcost{}+\log n)$ amortized work per update (insertion or deletion)~\cite{Tarjan83a}.


These building blocks facilitate many classic geometric algorithms.
The planar convex-hull problem can be solved by first sorting the points by $x$ coordinates
and then using Graham's scan that requires $O(\wcost{}n)$ work~\cite{DCKO08}. 
This scan step can be parallelized with $O(\log n)$ \depth{}~\cite{goodrich1987finding}. 
The output-sensitive version uses $O(n\log h+\wcost{}n\log\log h)$ work and $O(\log^2 n\log h\log \log h)$ depth where $h$ is the number of points on the hull~\cite{BBFGGMS16}.


\hide{
\paragraph{One-dimension stabbing query and the interval tree.}

\paragraph{Two-dimension range query and the range tree.} Given a set of $n$ points $p=\{p_i =(x_i,y_i)\}$ where $p_i\in P=X\times Y$, the three-sided 2D range query ask for the list of (or weighted-sum, count, etc.) of points with x-coordinate in range $x_L$ and $x_R$, and y-coordinate in range $y_B$ and $y_T$.

This query can be answered with the \emph{range tree}, which is a two-level tree structure. (...) In particular it can be represented by a two-level augmented map as:

{\small
\begin{tabular}{l@{}@{ }l@{ }@{}l@{}@{ }l@{ }@{}l@{ }@{}l@{ }@{}l@{ }@{}l@{ }@{}l@{ }@{}l@{ }@{}l@{ }@{}l@{ }}
$R_I$ &$=$& $\am$&(&$P$, &$<_Y$, &$W$, &$W$, &$(k,v) \mapsto v$, &$+_W$, &$0_W$&)\\
$R_O$ &$=$& $\am$&(&$P$, &$<_X$, &$W$, &$R_I$, &$R_I.$\text{singleton}, &$\cup$, &$\emptyset$&)
\end{tabular}}

\julian{What is $W$? Why is it in the outer tree as well as the inner tree?}
When both level are implemented by augmented trees, this two-level map structure becomes a range tree.

\paragraph{Three-sided range query and the priority tree.}
Given a set of $n$ points $p=\{p_i =(x_i,y_i)\}$ where $p_i\in P=X\times Y$, the three-sided 2D range query ask for the list of (or weighted-sum, count, etc.) of points with x-coordinate in range $x_L$ and $x_R$, and y-coordinate no less than $y_B$.
We usually call the y-coordinate the \emph{priority} of that point.

This query can be answered with the priority search tree~\cite{McCreight85}, which costs $O(n\log n)$ work (and writes) \julian{then work is $O(\omega n\log n)$} in constructing a tree of size $n$, and $O(k+\log n)$ work per query (with output size $k$) \julian{work is $O(\omega k+\log n)$}.
The root of the priority tree stores the point in $P$ with the highest priority. Then all the other points in $P$ are evenly split into two parts by the median of their x-coordinate, which then recursively form the left and right subtree.
In each level of construction we scan all points to find the point with the highest y-coordinate, remove it, write the left and right part into new arrays, and recursive build the subtrees. In total it costs $O(n)$ work and writes per level. This guarantees the tree to be a complete binary tree.
With a straight-forward divide-and-conquer technique, the construction can be parallelized with depth $O(\log^2 n)$. This data structure is completely static.
}

\section{General Techniques for Incremental Algorithms}
\label{sec:inc}

In this section, we first introduce our framework for randomized
incremental algorithms.  Our goal is to have a systematic approach for
designing geometric algorithms that are highly parallel and
write-efficient.


Our observation is that it should be possible to make randomized incremental algorithms write-efficient since each newly added object on expectation only conflicts with a small region of the current configuration.
For instance, in planar Delaunay triangulation, when a randomly chosen point is inserted, the expected number of encroached triangles is $6$.
Therefore, resolving such conflicts only makes minor modifications to the configuration during the randomized incremental constructions, leading to algorithms using fewer writes.
The challenges are in finding the conflicted region of each newly added object write-efficiently and work-efficiently, and in adding multiple objects into the configuration in parallel without affecting write-efficiency.
We will discuss the general techniques to tackle these challenges based on the \emph{history} graph~\cite{GKS92,BT93}, and then discuss how to apply them to develop parallel write-efficient algorithms for comparison sorting in Section~\ref{sec:inc-sorting}, planar Delaunay triangulation in Section~\ref{sec:delaunay}, and $k$-d tree construction in Section~\ref{sec:kdtree}.

\subsection{DAG Tracing}

We now discuss how to find the conflict set of each newly added object (i.e., only output the conflict primitives) based on a history (directed acyclic) graph~\cite{GKS92,BT93} in a parallel and write-efficient fashion.
Since the history graphs for different randomized incremental algorithms can vary, we abstract the process as a DAG tracing problem that finds the conflict primitives in each step by following the history graph.

\begin{definition}[\ourprob{}]\label{def:probdef}
  The \ourprob{} takes an element $x$, a DAG $G=(V,E)$, a root vertex $r\in V$
  with zero in-degree, and a boolean predicate function $f(x,v)$.
It computes the vertex set $S(G,x) = \{v \in V~|~f(x,v) \mbox{ and } \mbox{out-degree}(v) = 0\}$.
\end{definition}

We call a vertex $v$ \defn{visible} if $f(x,v)$ is true.

\begin{definition}[\ourprop{}]\label{def:dagproperty}
  We say that the \ourprob{} has the \ourprop{} when $v\in V$ is visible only if there exists at least one direct predecessor vertex $u$ of $v$ that is visible.
\end{definition}
\begin{center}
\begin{tabular}{cc}
  \toprule
  Variable & Description \\
  $D(G)$ & the length of the longest path in $G$ \\
  $R(G,x)$ & the set of all visible vertices in $G$ \\
  $S(G,x)$ & the output set of vertices \\
  \bottomrule
\end{tabular}
\end{center}
\begin{theorem}\label{thm:tracing}
The \ourprob{} can be solved in $O(|R(G,x)|)$ work, $O(D(G))$ depth and $O(|S(G,x)|)$ writes when the problem has the \ourprop{}, each vertex $v\in V$ has a constant degree, $f(x,v)$ can be evaluated in constant time, and the \smallmem{} has size $O(D(G))$.
Here $R(G,x)$, $D(G)$, and $S(G,x)$ are defined in the previous table.
\end{theorem}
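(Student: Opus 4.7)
The plan is to handle the problem with a top-down parallel recursion from the root $r$ that visits exactly the visible vertices $R(G,x)$ and processes each of them exactly once. A naive parallel BFS/DFS would either revisit vertices (blowing up the work beyond $|R(G,x)|$) or maintain ``visited'' markers in the large memory (blowing up the writes beyond $|S(G,x)|$), so the main difficulty is to serialize processing of each visible vertex without ever writing to large memory.

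The technical device I would use is a canonical-predecessor rule: among the $O(1)$ in-neighbors of a visible vertex $v$, declare the visible one of smallest identifier to be the unique parent responsible for $v$. By the \ourprop{}, every visible $v\ne r$ has at least one visible in-neighbor, so this rule is well defined; and because $G$ is a DAG, iterating the rule backwards from any visible $v$ produces a chain that must terminate at the root $r$. The test ``is the current vertex $u$ the canonical predecessor of $v$?'' is answered by reading $v$'s $O(1)$ in-neighbors and evaluating $f$ on each, costing $O(1)$ work and zero writes to large memory.

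The algorithm itself is a recursive procedure $\textsc{Trace}(v)$, invoked initially at $r$. Upon entry at a visible $v$ (reached only through its canonical parent), it examines each of $v$'s $O(1)$ out-neighbors $w$ in parallel, evaluates $f(x,w)$, performs the canonical-predecessor test for $w$, and forks a recursive call $\textsc{Trace}(w)$ iff $v$ is $w$'s chosen parent; if $v$ itself has out-degree $0$ it writes $v$ to the output buffer in large memory. Each stack frame holds $O(1)$ scratch data and the recursion depth is at most $D(G)$, so a single processor's stack fits in the $O(D(G))$ small memory. Summing: each $v\in R(G,x)$ is entered exactly once, spending $O(1)$ work on its $O(1)$ out-neighbors, for total work $O(|R(G,x)|)$; writes to large memory are exactly one per emitted sink, giving $O(|S(G,x)|)$; and the parallel depth is bounded by the length of the longest chain of recursive calls, which tracks a path of visible vertices in $G$, so it is $O(D(G))$.

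The main obstacle is showing that the local canonical-predecessor test really serializes global processing correctly, i.e., that exactly one recursive call reaches each visible vertex, and this is precisely where the \ourprop{} is needed to guarantee that every visible vertex is reachable through visible vertices and therefore is in fact entered. A minor additional detail is arranging the final sink output so it does not incur hidden writes beyond $|S(G,x)|$; this can be handled either by writing each discovered sink directly to a shared output array via a priority write, or by buffering sinks locally in small memory and flushing them in bulk as the recursion unwinds.
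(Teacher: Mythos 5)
Your proposal is correct and follows essentially the same route as the paper: a canonical/highest-priority visible-predecessor rule that makes the search tree unique and deterministic, eliminating visited markers, with the recursion stack held in the $O(D(G))$ small memory and fork-join parallelism over out-edges giving $O(|R(G,x)|)$ work, $O(|S(G,x)|)$ writes, and $O(D(G))$ depth. Your explicit backward-chain argument that the canonical-parent relation terminates at the root is a slightly more detailed justification of the correctness step the paper leaves implicit, but it is the same idea.
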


\begin{proof}
We first discuss a sequential algorithm using $O(|R(G,x)|)$ work and $O(|S(G,x)|)$ writes.
Because of the \ourprop{}, we can use an arbitrary search algorithm to visit the visible nodes, which requires $O(R(G,x))$ writes since we need to mark whether a vertex is visited or not.
However, this approach is not write-efficient when $|S|=o(|R(G,x)|)$, and we now propose a better solution.

Assume that we give a global ordering $\prec_v$ of the vertices in $G$ (e.g., using the vertex labels) and use the following rule to traverse the visible nodes based on this ordering: a visible node $v\in V$ is visited during the search of its direct visible predecessor $u$ that has the highest priority among all visible direct predecessors of $v$.
Based on this rule, we do not need to store all visited vertices.
Instead, when we visit a vertex $v$ via a directed edge $(u,v)$ from $u$, we can check if $u$ has the highest priority among all visible predecessors of $v$.
This checking has constant cost since $v$ has a constant degree and we assume the visibility of a vertex can be verified in constant time.
As long as we have a \smallmem{} of size $O(D(G))$ that keeps the recursion stack and each vertex in $V$ has a constant in-degree, we can find the output set $S(G,x)$ using $O(|R(G,x)|)$ work and $O(|S(G,x)|)$ writes.

We note that the search tree generated under this rule is unique and deterministic.
Therefore, this observation allows us to traverse the tree in parallel and in a fork-join manner: we can simultaneously fork off an independent task for each outgoing edges of the current vertex, and all these tasks can be run independently and in parallel.
The parallel depth, in this case, is upper bounded by $O(D(G))$, the depth of the longest path in the graph.
\end{proof}

Here we assume the graph is explicitly stored and accessible, so we slightly modify the algorithms to generate the history graph, which is straightforward in all cases in this paper.

\subsection{The Prefix-Doubling Approach}

The sequential version of randomized incremental algorithms process one object (e.g., a point or vertex) in one iteration.
The prefix-doubling approach splits an algorithm into multiple rounds, with the first round
processing one iteration and each subsequent round doubling the number
of iterations processed.
This high-level idea is widely used in parallel algorithm design.
We show that the
prefix-doubling approach combined with the \ouralgo{} can reduce the
number of writes
by a factor of $\Theta(\log n)$ in a number of algorithms.  In particular,
our variant of prefix doubling first processes $n / \log n$ iterations
using a standard write-inefficient approach (called as the
\defn{\initround}).  Then the algorithm runs $O(\log\log n)$
\defn{\incround{s}}, where the $i$'th round processes the next
$2^{i-1}n/\log n$ iterations.

\hide{
\myparagraph{The \batched{p}}
The \batched{p} is a variant of the classic incremental construction when the iterative independence graph is a tree.
In the classic version, each primitive is directly inserted into the current configuration in turn.
However, in the \batchedshort{p} version, each primitive finds the leaf node it belongs to, but it leaves itself in the node, instead of directly adding to the configuration.
We start to process a leaf node once it contains $p$ primitives (here we assume that the \smallmem{} can hold all $p$ primitives).
After all primitives are added, we finish the computation in each node and generate the final output.
}

\section{Comparison Sort}\label{sec:inc-sorting}

We discuss a write-efficient version of incremental sort
introduced in~\cite{blelloch2016parallelism},
which can also be used in many geometry problems and algorithms.

The first algorithm that we consider is sorting by incrementally
inserting into a binary search tree (BST) with no rebalancing.
Algorithm~\ref{alg:incsort} gives pseudocode that works either
sequentially or in parallel.  In the parallel version, the
\textbf{for} loop is a \textbf{parallel for}, such that each vertex
tries to add itself to the tree in every round.  When there are
multiple assignments on Line~\ref{line:assign} to the same location,
the smallest value gets written using a priority-write.

\newcommand{\deref}[1]{^*\!#1}
\begin{algorithm}[t]
\caption{\incsort~\cite{blelloch2016parallelism}}\label{alg:incsort}
\fontsize{9pt}{9pt}\selectfont
\KwIn{A sequence $K = \{k_1,\ldots,k_n\}$ of keys.}
\KwOut{A binary search tree over the keys in $K$.}
  \vspace{.3em}
\tcp{\textrm{$\deref{P}$ reads indirectly through the pointer $P$.\\
The check on Line~\ref{line:par} is only needed for the parallel version.}}
  \vspace{.3em}
Root $\gets$ a pointer to a new empty location\\
\For {$i \leftarrow 1$ to $n$\label{line:forloop}} {
  $N\gets$ newNode($k_i$)\\
  $P\gets$ Root\\
  \While {true\label{line:whileloop}} {
    \If {$~\deref{P} =$ \emph{null} \label{line:isempty}} {
      write $N$ into the location pointed to by $P$ \label{line:assign}\\
      \If {$~\deref{P} = N$ \label{line:par}}{break}}
    \If {$N$.key $<{} \deref{P}$.key ~} {
        $P\gets$ pointer to $\deref{P}$.left}
    \Else {$P\gets$ pointer to $\deref{P}$.right}
  }
}
\Return {\emph{Root}}
\end{algorithm}

Blelloch et al.~\cite{blelloch2016parallelism} showed that the
parallel version of \incsort{} generates the same tree as the
sequential version, and for a random order of $n$ keys runs in $O(n
\log n)$ work and $O(\log n)$ depth with high probability\footnote{We
  say $O(f(n))$ \defn{with high probability (\whp{})} to indicate
  $O(kf(n))$ with probability at least $1- 1/n^k$.}  on a
priority-write CRCW PRAM.  
The depth bound increases to $O(\log^2 n)$ when only binary forking is allowed.
The key observation is that insertion of
$n$ keys into this tree in random order has 
the longest dependence chain to be
$O(\log n)$ \whp{} (i.e., the
tree has depth $O(\log n)$ \whp{}).  However, this algorithm yields
$O(n\log n)$ writes \whp{} as each element can execute the while loop
on Lines 5--13 $O(\log n)$ times \whp{}, with each iteration incurring
a write.
We discuss how the DAG-tracing algorithm and prefix doubling in Section~\ref{sec:inc} reduce the number of
writes in this algorithm.

\myparagraph{Linear-write and $O(\log^2 n\log \log n)$-depth incremental sort}
We discuss a linear-write parallel sorting algorithm based on the
prefix-doubling approach.
The \initround{} constructs the search tree
for the first $n / \log_2 n$ elements using
Algorithm~\ref{alg:incsort}. For the $i$'th \incround{} where $1\le
i\le\lceil\log_2\log_2 n\rceil$, we add the next $2^{i-1}n / \log_2 n$
elements into the search tree.  In an \incround{}, instead of directly running
Algorithm~\ref{alg:incsort}, we first find the correct position of
each element to be inserted (i.e., to reach line~\ref{line:assign} in
Algorithm~\ref{alg:incsort}).  This step can be implemented using the
\ouralgo{}, and in this case the DAG is just the search tree
constructed in the previous round.  The root vertex $r$ is the tree
root, and $f(x,v)$ returns true iff the the search of element $x$  visit the node $v$.
Note that the DAG is
actually a rooted tree, and also each element only visits one tree
node in each level and ends up in one leaf node (stored in $P$), which
means this step requires $O(2^{i-1}n)$ work, $O(\log n)$ depth \whp{},
and $O(2^{i-1}n/\log n)$ writes in the $i$'th round.

After each element finds the empty leaf node that it belongs to, in
the second step in this round we then run Algorithm~\ref{alg:incsort},
but using the pointer $P$ that was computed in the first step.  We
refer to the elements in the same empty leaf node as
belonging to the same \emph{bucket}.  Notice that the depth of this
step in one \incround{} is upper bounded by the depth of a random
binary search tree which is $O(\log n)$ \whp{}, so this algorithm has
$O(\log^2 n\log \log n)$ depth \whp{}: $O(\log \log n)$ rounds, and in each round there are $O(\log n)$ levels.

We now analyze the expected number of writes of in the second step.
In each \incround{} the number of elements inserted is the same as the
number of elements already in the tree.  Hence it is equivalent to
randomly throwing $k$ balls into $k$ bins, where $k$ is the number of
elements to be inserted in this \incround{}.  Assume that the
adversary picks the relative priorities of the elements within each
bin, so that it takes $O(b^2)$ work and to sort $b$ elements within each
bucket in the worst case.
We can show that the probability that there are
$b$ elements in a bucket is $\Pr(b)={k \choose b}\cdot
1/k^b(1-1/k)^{k-b}$, and $\Pr(b+1)<\Pr(b)\cdot c_1$ when $b>c_2$, for
some constant $c_1<1$ and $c_2>1$.  
The expected number of writes within each bucket in this \incround{} is therefore:
 $$k\cdot\sum_{i=0}^{k} i^2\Pr(i)<k\left(O(1)+\sum_{i=c_2}^{k}
i^2\cdot\Pr(c_2)c_1^{i-c_2}\right)=O(k)$$ Hence the overall number of
writes is also linear.
Algorithm~\ref{alg:incsort} sorts $b$ elements in a bucket with $O(b)$ depth, and \whp{}
the number of balls in each bin is $O(\log k)$, so the depth in this step is included in the depth analysis in the previous paragraph.
Combining the work and depth gives the following lemma.

\begin{lemma}
\textup{\incsort{}} for a random order of $n$ keys runs in
  $O(n \log n+\wcost{}n)$ expected work and $O(\log^2 n\log \log n)$ depth \whp{} on \anp{} model with priority-write.
\end{lemma}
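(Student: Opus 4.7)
The plan is to follow the two-phase structure already laid out in the preceding discussion and to track three quantities separately: writes, reads, and depth. Because the \anp{} work is just reads plus $\wcost{}$ times writes, the target bound $O(n\log n + \wcost{}n)$ reduces to showing that the algorithm does $O(n\log n)$ expected reads and $O(n)$ expected writes. I would split the algorithm into the \initround{} over the first $n/\log_2 n$ keys, followed by $r=\lceil \log_2\log_2 n\rceil$ \incround{s}, with round $i$ adding $k_i = 2^{i-1}n/\log_2 n$ keys. Note $\sum_{i=1}^r k_i = \Theta(n)$ and each round's tree has size $\Theta(k_i)$, so bounds quoted in terms of $k_i$ sum to the claimed totals.

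For writes: the \initround{} executes Algorithm~\ref{alg:incsort} on a random set of $n/\log_2 n$ keys, whose BST has depth $O(\log n)$ \whp{}, so it performs $O(n/\log n)$ writes (one per placed node). In round $i$, Phase A applies Theorem~\ref{thm:tracing} with the previous BST as the DAG, root $r$, and $f(x,v)$ the predicate ``the search for $x$ visits $v$''; since every inserted key reaches exactly one leaf pointer, $|S(G,x)|=1$ per key and Phase A writes only $O(k_i)$ cells in total. Phase B runs Algorithm~\ref{alg:incsort} inside the buckets, which together contain $k_i$ keys, so $O(k_i)$ writes suffice in expectation (one per final placement). Summing gives $O(n)$ expected writes. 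For reads, the \initround{} does $O(n)$ work. Phase A of round $i$ contributes $O(|R(G,x)|)=O(k_i\log n)$ reads by Theorem~\ref{thm:tracing} since the tree has depth $O(\log n)$ \whp{}. Phase B is where I would spend the most care: I would set up the balls-in-bins calculation already sketched, noting that the $k_i$ inserted keys fall into $k_i$ buckets uniformly at random (independent of their relative priorities, which the adversary controls), and that an adversarial bucket of size $b$ costs $O(b^2)$ reads. The key inequality $\Pr(b+1) < c_1\Pr(b)$ for $b > c_2$ lets me bound $\sum_{b=0}^{k_i} b^2\Pr(b) = O(1)$, and multiplying by $k_i$ buckets and summing over rounds yields $O(n)$ reads for Phase B. Total reads are thus $O(n\log n)$.

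For depth, each incremental round contributes $O(\log n)$ depth in Phase A by Theorem~\ref{thm:tracing}, and Phase B runs Algorithm~\ref{alg:incsort} in parallel across buckets whose maximum size is $O(\log k_i)=O(\log n)$ \whp{} by standard Chernoff bounds on balls-and-bins; by the cited $O(\log^2 n)$ binary-forking depth bound for \incsort{} on a random input, Phase B has depth $O(\log^2 n)$ \whp{}. With $r=O(\log\log n)$ rounds executed sequentially, the total depth is $O(\log^2 n\log\log n)$ \whp{}. The main obstacle I anticipate is the sum-of-squares calculation in Phase B: I would have to justify that the adversary's choice of within-bucket priorities does not help them beyond the $b^2$ worst case, and that the geometric tail really does hold for $b$ above a constant threshold regardless of $k_i$. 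Coupling an expected-value bound on work with a whp bound on depth is then routine, since the two quantities are analyzed independently and the depth bound holds in every round with probability $1-n^{-\Omega(1)}$, surviving a union bound over $O(\log\log n)$ rounds.
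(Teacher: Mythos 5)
Your proposal follows the paper's decomposition exactly: an initial round on $n/\log_2 n$ keys, then $O(\log\log n)$ incremental rounds each consisting of a DAG-tracing phase to find leaves and a within-bucket run of Algorithm~\ref{alg:incsort}, with the balls-in-bins computation controlling the cost of the second phase. The one substantive flaw is your accounting of writes in Phase~B. You assert $O(k_i)$ writes ``one per final placement,'' but in the parallel version of Algorithm~\ref{alg:incsort} an element attempts a priority-write every time it arrives at a null pointer, and it can lose that race and descend; with adversarial relative priorities within a bucket, an element whose final depth is $d$ can incur $\Theta(d)$ write attempts, so a bucket of size $b$ can cost $\Theta(b^2)$ \emph{writes}, not just $\Theta(b^2)$ reads. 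Indeed, this is precisely why the paper carries out its $\sum_i i^2 \Pr(i)$ balls-in-bins bound for the \emph{expected number of writes}; if a bucket cost only $O(b)$ writes, the original non-prefix-doubled algorithm would already be write-optimal. The fix is immediate, since the geometric-tail calculation you describe for reads applies verbatim to writes, but as stated your proof leaves the write bound unsupported. A smaller point: attributing Phase~B's $O(\log^2 n)$ depth to the cited binary-forking bound ``on a random input'' is not quite right, since the within-bucket order is adversarial; the correct reason is that buckets have size $O(\log n)$ \whp{} so the inner BST has depth $O(\log n)$, and each of those while-loop rounds forks over $k_i$ elements at $O(\log k_i)$ depth, giving $O(\log^2 n)$ per incremental round.
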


\myparagraph{Improving the depth to $O(\log^2 n)$} We can improve the
depth to $O(\log^2 n)$ as follows.  Notice that for $b=c_3\log \log n$
and
$c_4>1$, $$\sum_{i=b}^{k}\Pr(i)<\sum_{i=b}^{k}\Pr(c_2)c_1^{i-c_2}=\log^{-c_4}k$$
This indicates that only a small fraction of the buckets in each
\incround{} are not finished after $b=c_3\log \log n$ iterations of
the while-loop on Line~\ref{line:whileloop} of
Algorithm~\ref{alg:incsort}.

In the depth-improved version of the algorithm, the while-loop
terminates after $b=c_3\log \log n$ iterations, and postpones these
insertions (and all further insertions into this subtree in future
rounds) to a final round.  The final round simply runs another round of Algorithm~\ref{alg:delaunaypar} and inserts all
uninserted elements (not write-efficiently).
Clearly the depth of the last round is $O(\log^2 n)$, since it is upper bounded
by the depth of running Algorithm~\ref{alg:incsort} for all $n$ elements.  The depth of the whole algorithm is
therefore $O(\log^2 n)+O(\log n\log\log n)\cdot O(\log\log n)+O(\log^2 n)=O(\log^2
n)$ \whp{}.

We now analyze the number of writes in the final round.  The
probability that a bucket in any round does not finish is
$\log^{-c_4}k$, and pessimistically there are in total $O(n\cdot \log^{-c_4}k)$ of such
buckets.  We also know that using Chernoff bound the maximum size of a
bucket after the first round is $O(\log^2 n)$ \whp{}.  The number of
writes in the last round is upper bounded by the overall number of
uninserted elements times the tree depth, which is $O(n)\cdot
\log^{-c_4}n\cdot O(\log^2 n)\cdot O(\log n)=o(n)$ by setting $c_3$
and $c_4$ appropriately large.  This leads to the main theorem.

\begin{theorem}
\textup{\incsort{}} for a random order of $n$ keys runs in
$O(n \log n+\wcost{}n)$ expected work and $O(\log^2 n)$ depth \whp{} on \anp{} model with priority-write.
\end{theorem}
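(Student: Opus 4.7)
The plan is to modify the algorithm of the preceding lemma by (i) capping the inner \textbf{while} loop of Algorithm~\ref{alg:incsort} at $b = c_3 \log\log n$ iterations within every incremental round, (ii) marking as \emph{deferred} any element that fails to reach an empty slot within $b$ comparisons, together with every later element that would route through the same failing subtree, and (iii) appending a single un-capped final round that runs plain \incsort{} on all deferred elements. The initial round and the $\lceil \log_2\log_2 n\rceil$ prefix-doubling rounds are otherwise unchanged, so the work and write bounds of the preceding lemma carry over to everything outside the final round; what remains is to re-verify the depth and to charge the final round cheaply.

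For the depth I would sum three contributions, matching the arithmetic displayed in the excerpt. The initial round runs plain \incsort{} on $n/\log n$ keys and has depth $O(\log^2 n)$ w.h.p.\ under binary forking. Each of the $O(\log\log n)$ incremental rounds now has depth $O(\log n\log\log n)$: the DAG-tracing step traverses a tree of height $O(\log n)$ with $O(\log n)$ forking overhead per level, while the bucket-insertion step, previously proportional to the maximum bucket depth, is hard-capped at $b = O(\log\log n)$. The final round is plain \incsort{} on at most $n$ keys and likewise costs $O(\log^2 n)$ w.h.p.\ under binary forking. Summing the three pieces yields $O(\log^2 n)$ w.h.p.

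The capped incremental rounds inherit the $O(n\log n + \wcost{} n)$ expected work of the previous lemma verbatim, since capping can only decrease their cost, so the only new ingredient is the final round. I would bound its cost by controlling the number of deferred elements: choosing $c_3$ so that $b = c_3 \log\log n$, the geometric tail $\Pr(i) \le \Pr(c_2)\, c_1^{\,i-c_2}$ from the preceding lemma makes the probability that any particular bucket fails to finish at most $\log^{-c_4} k$, where $k$ is the number of keys inserted in the current round and $c_4$ can be made arbitrarily large by taking $c_3$ large. Summing over all $k$ buckets in a round and over all $O(\log\log n)$ rounds, the expected number of buckets that ever fail is $O(n\,\log^{-c_4} n \cdot \log\log n)$; a Chernoff bound caps the maximum bucket size at $O(\log^2 n)$ w.h.p., and each deferred key costs at most $O(\log n)$ write iterations in the final un-capped round. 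Multiplying these factors and taking $c_4$ large gives $o(n)$ additional writes, which is absorbed by the existing $O(\wcost{} n)$ budget.

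The main obstacle I anticipate is handling the correlation introduced by the ``defer the entire subtree'' rule, since once a bucket fails every subsequent element routed through it is automatically deferred as well, and these events are not independent across rounds. I plan to deal with this by charging every deferred element to the unique topmost failing ancestor bucket responsible for its deferral, bounding the mass of any such bucket by the $O(\log^2 n)$ Chernoff tail on bucket size, and then absorbing the resulting product $O(n \log^{-c_4} n)\cdot O(\log^3 n)$ into $o(n)$ by choosing $c_3$ and $c_4$ appropriately large.
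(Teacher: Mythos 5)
Your proposal is correct and follows essentially the same route as the paper's own proof: cap the while-loop at $b=c_3\log\log n$ iterations, defer unfinished buckets (and all later insertions into those subtrees) to one final uncapped round, bound the total depth as initial round $+$ incremental rounds $+$ final round $=O(\log^2 n)$, and bound the final round's writes by (number of failing buckets $O(n\log^{-c_4}n)$) $\times$ (max bucket size $O(\log^2 n)$ by Chernoff) $\times$ ($O(\log n)$ per deferred key) $=o(n)$ for $c_3,c_4$ large. Only a minor quibble: the DAG-tracing step per round is $O(\log n)$ depth (the forking overhead is additive, not per level), which is what makes your stated per-round bound of $O(\log n\log\log n)$, and hence the overall $O(\log^2 n)$ total, come out as in the paper.
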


Note that this gives a much simpler work/write-optimal
logarithmic-depth algorithm for comparison sorting than the
write-optimal parallel sorting algorithm in~\cite{BFGGS15} that is
based on Cole's mergesort~\cite{Cole88}, although our algorithm is
randomized and requires priority-writes.

\newcommand{\reptriangle}{\textsc{ReplaceTriangle}}
\newcommand{\seqinc}{\textsc{IncrementalDT}}
\newcommand{\parinc}{\textsc{ParIncrementalDT}}
\newcommand{\incircle}{\textsc{inCircle}}
\newcommand{\gt}{G_T}
\newcommand{\tri}{t}
\newcommand{\points}{V}
\newcommand{\point}{v}

\section{Planar Delaunay Triangulation}\label{sec:delaunay}

A Delaunay triangulation (DT) in the plane is a triangulation of a set
of points $P$ such that no point in $P$ is inside the circumcircle of
any triangle (the circle defined by the triangle's three corner
points).  We say a point \defn{encroaches} on a triangle if it is in
the triangle's circumcircle, so the triangle will be replaced once this point is added to the triangulation.
We assume for simplicity that the
points are in general position (no three points on a line or four
points on a circle).


Delaunay triangulation is widely studied due to its importance in many geometry applications.
Without considering the asymmetry between reads and writes, it can be
solved sequentially in optimal $\Theta(n \log n)$ work.
It is relatively easy to generate a sequential write-efficient version
that does $\Theta(n \log n)$ reads and only requires $\Theta(n)$ writes based on
the plane sweep method~\cite{blelloch2016efficient}.
There are several work-efficient parallel algorithms that run in polylogarithmic
depth~\cite{RS92,AG93,BHMT99,blelloch2016parallelism}. 
More practical ones (e.g.,~\cite{GKS92,BT93}) have linear depth.
Unfortunately, none of them perform any less than $\Theta(n \log n)$
writes.  In particular the divide-and-conquer
algorithms~\cite{AG93,BHMT99} seem to inherently require $\Theta(n
\log n)$ writes since the divide or merge step requires generating an
output of size $\Theta(n)$, and is applied for $\Theta(\log n)$
levels.  The randomized incremental approach of Blelloch
et al.~(BGSS)~\cite{blelloch2016parallelism}, which improves the Boissonnat and Teillaud algorithm~\cite{BT93} to polylogarithmic depth, also requires $O(n
\log n)$ writes for reasons described below.

In this section, we show how to modify the BGSS algorithm to use
only a linear number of writes, while
maintaining the expected $\Theta(n \log n)$ bound on work, and
polylogarithmic depth.
Algorithm~\ref{alg:delaunaypar} shows the pseudocode for the BGSS algorithm.
In the algorithm, the vertices are labeled from $1$ to $n$ and when
taking a $\min$ over vertices (Lines~7--\ref{line:reptri}) it is with respect to these labels.
The algorithm proceeds in rounds the algorithm
adds some triangles (Line~\ref{line:delaunayAdd}) and removes others
(Line~\ref{line:delaunayDel}) in each round.

In the algorithm, there are dependences between triangles so that
some of them need to be processed before the other triangles can proceed.
For a sequence of
points $\points$, BGSS define the dependence graph $\gt(\points) =
(T,E)$ for the algorithm in the following way.  The vertices $T$
correspond to triangles created by the algorithm, and for each call
to \reptriangle$(M,\tri,\point_i)$, we place an arc from triangle
$\tri$ and its three neighbors ($t_1, t_2$, and $t_3$) to each of the
one, two, or three triangles created by \reptriangle.  Every triangle
$T$ with depth $d(T)$ in $\gt(\points)$ is created by the algorithm in
round $d(T)$.  BGSS show that for a randomly ordered set of input
points of size $n$, the depth of the dependence graph is $O(\log n)$ \whp{}\footnote{We say $O(f(n))$ with
high probability (\whp{}) to indicate $O(c\cdot f(n))$ with probability to be at least $1-1/n^c$ for any constant $c>0$.}, and hence
the algorithm runs in $O(\log n)$ rounds \whp{}.  Each round can be done in
$O(\log n)$ depth giving an overall depth of $O(\log^2 n)$ \whp{} on the nested-parallel model.

The algorithm, however, is not write-efficient.  In particular, every
point moves down the DAG through the rounds (on
line~\ref{line:del-incircle}), and therefore can be moved $O(\log n)$
times, each requiring a write.

\SetKwInput{KwMaintains}{Maintains}%
\begin{algorithm}[t]
\caption{\parinc{}}
\label{alg:delaunaypar}
\fontsize{9pt}{9pt}\selectfont
    \SetKwFor{ParForEach}{parallel foreach}{do}{endfch}
\KwIn{A sequence $\points = \{\point_1,\ldots,\point_n\}$ of points in the plane.}
\KwOut{The Delaunay triangulation of $\points$.}%
\KwMaintains{$E(\tri)$, the points that encroach on each triangle $\tri$.}
  \vspace{.5em}
$\tri_b \gets$ a sufficiently large bounding triangle\\
$E(\tri_b) \gets \points$\\
$M \gets \{t_b\}$ \\
\While {$E(\tri) \neq \varnothing$ for any $\tri \in M$} {
    \ParForEach {triangle $\tri \in M$} {
        Let $\tri_1,t_2,t_3$ be the three neighboring triangles \\
        \If {$\min(E(\tri)) \leq \min(E(\tri_1) \cup E(\tri_2) \cup E(\tri_3))$} {
            \reptriangle{}$(M,t,\min(E(\tri)))$\label{line:reptri}
        }
    }
}
\Return {$M$}

  \vspace{.5em}
    \SetKwProg{myfunc}{function}{}{}
    \myfunc{\sc\reptriangle{}($M$,$\tri$,$v$)} {
    \ForEach {edge $(u,w)\in t$ (three of them)} {
        \If {$(u,w)$ is a boundary of $v$'s encroached region} {
            $\tri_o \gets$ the other triangle sharing $(u,w)$\\
            $\tri' \gets (u,w,v)$\label{line:new-triangle}\\
            \mbox{$E(\tri') \gets \{v' \in E(\tri) \cup E(\tri_o)~|~\mbox{\incircle{}}(v',t')\}$}\label{line:del-incircle}\\
            $M \gets M \cup \{t'\}$\label{line:delaunayAdd}
        }
    }
    $M \gets M \setminus \{t\}$\label{line:delaunayDel}\\
  }
\end{algorithm}

\myparagraph{A Linear-Write Version}
We now discuss a write-efficient version of the BGSS algorithm.  We use
the DAG tracing and prefix-doubling techniques introduced in
Section~\ref{sec:inc}.  The algorithm first computes the DT of the
$n/\log_2 n$ earliest points in the randomized order, using the non-write-efficient version.
This step requires linear writes.
It then runs $O(\log\log n)$ \incround{}s and in each round adds a number of points
equal to the number of points already inserted.

To insert points, we need to construct a search structure in the DAG tracing problem.
We can
modify the BGSS algorithm to build such a structure.  In fact, the
structure is effectively a subset of the edges of the dependence graph
$\gt(\points)$.  In particular, in the algorithm the only \incircle{} test
is on Line~\ref{line:del-incircle}.  In this test, to determine if a
point encroaches $t'$, we need only check its two ancestors $t$ and
$t_o$ (we need not also check the two other triangles neighboring
$t$, as needed in $\gt(\points)$).  This leads to a DAG with depth at most as large as
$\gt(\points)$, and for which every vertex has in-degree 2.  The
out-degree is not necessarily constant.  However, by noting that there
can be at most a constant number of outgoing edges to each level of the DAG,
we can easily transform it to a DAG with constant out-degree by
creating a copy of a triangle at each level after it has out-neighbors.  This does not increase the depth, and the number of copies
is at most proportional to the number of initial triangles ($O(n)$ in
expectation) since the in-degrees are constant.  We refer to this as
the \emph{tracing structure}. An example of this structure is shown in Figure~\ref{fig:delaunay-tracing}.

The tracing structure can be used in the \ourprob{} (Definition~\ref{def:probdef}) using the predicate $f(v,t) =
\mbox{\incircle{}}(v,t)$.  This predicate has the traceable property
since a point can only be added to a triangle $t'$ (i.e., encroaches on
the triangle) if it encroached one of the two input edges from $t$ and
$t_o$.  We can therefore use the \ouralgo{} to find all of the triangles
encroached on by a given point $v$ starting at the initial root
triangle $\tri_b$.

\begin{figure}
\begin{center}
  \includegraphics[width=.8\columnwidth]{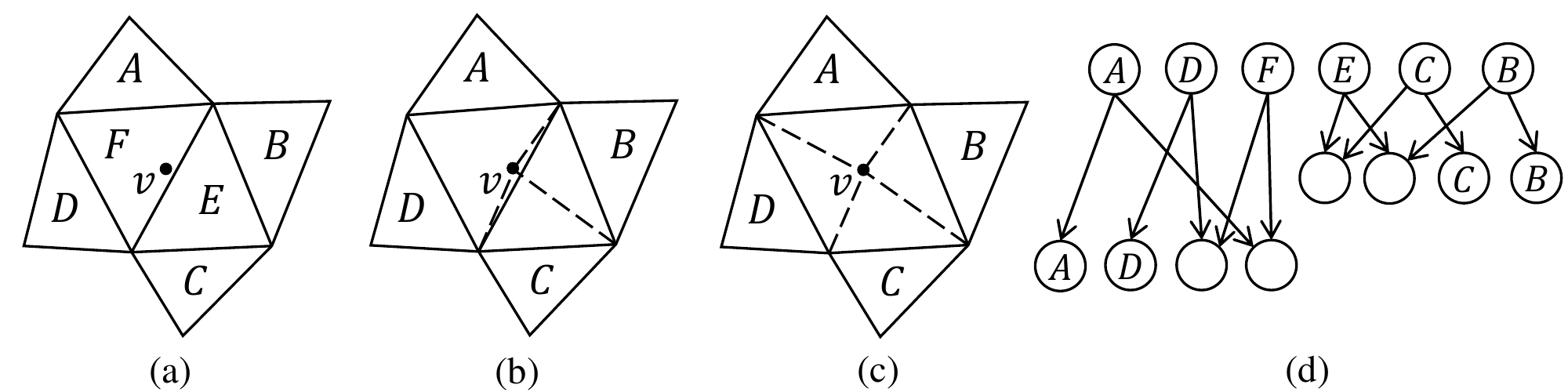}
\end{center}\vspace{-1em}
\caption[An example of the tracing structure in Delaunay
  triangulation]{An example of the tracing structure.  Here a point
  $v$ is added and the encroaching region contains triangles $E$ and
  $F$ (subfigure (a)).  Four new triangles will be generated and replace the two previous triangles.  They may or may not be created
  in the same round, and in this example this is done in two substeps
  (subfigures (b) and (c)).  Part of the tracing structure is shown in
  subfigure (d).  Four neighbor triangles $A$, $B$, $C$, and $D$ are
  copied, and four new triangles are created.  An arrow indicates that
  a point is encroached by the head triangle only if it is encroached
  by the tail triangle. }
\label{fig:delaunay-tracing}
\end{figure}

We first construct the DT of the first $n/\log_2 n$ points in the
\initround{} using Algorithm~\ref{alg:delaunaypar} while building
the tracing structure.  Then at the beginning of each \incround{},
each point traces down the structure to
find its encroached triangles, and leaves itself in the encroached set
of that triangle.   Note that the encroached set for a given point might
be large, but the average size across points is constant in expectation.
We now analyze the cost of finding all the encroached triangles when
adding a set of new points.  As discussed, the depth of $G$
is upper bounded by $O(\log n)$ \whp{}.  The number of encroached
triangles of a point $x$ can be analyzed by considering the degree of
the point (number of incident triangles) if added to the DT.  By
Euler's formula, the average degree of a node in a planar graph is at
most $6$.  Since we add the points in a random order, the expected value
of $|S(G,x)|$ in Theorem~\ref{thm:tracing} is constant.    Finally, the
number of all encroached (including non-leaf) triangles of this point
is upper bounded by the number of \incircle{} tests.  Then $|R(G,x)|$,
the expected number of visible vertices of $x$, is $O(\log n)$
(Theorem~4.2
in~\cite{blelloch2016parallelism}). 

After finding the encroached triangles for each point being added, we
need to collect them together to add them to the triangle.
This step can be done in parallel with a semisort,
which takes linear expected work (writes) and $O(\log^2 m)$ depth \whp{}~\cite{gu2015top}, where $m$ is the number of inserted points in this round. 
Combining these results leads to the following lemma.

\begin{lemma}\label{lem:delaunay-tracing}
  Given $2m$ points in the plane and a tracing structure $T$ generated
  by Algorithm~\ref{alg:delaunaypar} on a randomly selected subset of
  $m$ points, computing for each triangle in $T$ the points that
  encroach it among the remaining $m$ points takes
  $O(m\log m+\wcost{}m)$ \work{} ($O(m)$ writes) and $O(\log^2 n)$ depth \whp{}
  in the \anp{} model. 
\end{lemma}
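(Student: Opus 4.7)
The plan is to run the DAG tracing algorithm of Theorem~\ref{thm:tracing} in parallel, one instance per new point, on the tracing structure $T$, and then collect the point/triangle incidences via semisort. The predicate is $f(v,t) = \mbox{\incircle{}}(v,t)$, which satisfies the \ourprop{} because a point can only be inside the circumcircle of a new triangle $t'$ if it was already inside one of the two triangles ($t$ or $t_o$) whose shared edge $t'$ was built on. The structure $T$ has constant in-degree (2) and was explicitly massaged to have constant out-degree, so Theorem~\ref{thm:tracing} applies, and its small-memory requirement $O(D(T))=O(\log n)$ is met.

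The next step is to instantiate the three key quantities appearing in Theorem~\ref{thm:tracing}. First, $D(T)$ is at most the depth of the BGSS dependence DAG $\gt$ on a random order of $m$ points, which is $O(\log n)$ \whp{} by the BGSS result. Second, for a single point $x$ chosen at random and then inserted, $|S(T,x)|$ equals the number of triangles incident to $x$ in the resulting DT, which by Euler's formula averages to less than $6$, so $\E[|S(T,x)|]=O(1)$. Third, $|R(T,x)|$ is bounded by the total number of \incircle{} tests performed when inserting $x$; by the analysis in~\cite{blelloch2016parallelism} (Theorem~4.2 there), this is $O(\log m)$ in expectation. Summing over the $m$ inserted points by linearity of expectation gives total work $O(m\log m)$ and total writes $O(m)$ for the tracing phase, while running all $m$ traces in parallel contributes depth $O(\log n)$ \whp{}

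After each point has identified the set of triangles it encroaches, the pairs (point, triangle) must be grouped by triangle so that each triangle $t\in T$ gets its encroachment set $E(t)$. The total number of such pairs is $\sum_x |S(T,x)| = O(m)$ in expectation, so we can use the write-efficient parallel semisort of~\cite{gu2015top}, which runs in $O(m)$ expected work and writes and $O(\log^2 m)=O(\log^2 n)$ depth \whp{} Adding $\wcost$ per write gives total work $O(m\log m+\wcost m)$, total writes $O(m)$, and depth $O(\log n)+O(\log^2 n)=O(\log^2 n)$ \whp{}, as claimed.

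The main subtlety I expect is the last-step bookkeeping: the bound on $|R(T,x)|$ cited from~\cite{blelloch2016parallelism} is stated for the full randomized incremental DT, so I would need to argue that restricting to the sub-DAG $T$ built on the first $m$ points (and tracing only the remaining $m$ points against it) does not inflate the expected visible-vertex count, which follows because each of the $m$ remaining points is independent of the random order of the first $m$ and its visible-vertex count is dominated by the number of \incircle{} tests it would incur if inserted next into that DT.
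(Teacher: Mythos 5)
Your proposal is correct and follows essentially the same route as the paper: the paper also instantiates Theorem~\ref{thm:tracing} on the tracing structure with the predicate $f(v,t)=\mbox{\incircle{}}(v,t)$, bounds $D(G)$ by $O(\log n)$ \whp{}, bounds $\E[|S(G,x)|]$ by a constant via Euler's formula, bounds $\E[|R(G,x)|]=O(\log n)$ by the number of \incircle{} tests (Theorem~4.2 of~\cite{blelloch2016parallelism}), and then collects the point--triangle pairs with the write-efficient semisort of~\cite{gu2015top} in $O(\log^2 m)$ depth. Your closing remark about why the BGSS \incircle{}-test bound transfers to tracing the second $m$ points against the structure built on the first $m$ is a detail the paper leaves implicit, and your justification is sound.
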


The idea of the algorithm is to keep doubling the size of the set that we
add (i.e., prefix doubling).   Each round
applies Algorithm~\ref{alg:delaunaypar} to insert
the points and build a tracing structure, and then the \ouralgo{} to
locate the points for the next round.
The depth of each round is upper bounded by the overall depth of the
DAG on all points, which is $O(\log n)$ \whp{}, where $n$ is the original size.
We obtain the following theorem.

\begin{theorem}\label{the:delaunaypar}
Planar Delaunay triangulation can be computed using $O(n \log n+\wcost{}n)$ work (i.e., $O(n)$ writes) in expectation and $O(\log^2 n\log \log n)$ depth \whp{} on the \anp{} model with priority-writes.
\end{theorem}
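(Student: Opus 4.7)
The plan is to analyze the prefix-doubling algorithm just described, bounding its work, writes, and depth by summing contributions across the initial round and the $O(\log\log n)$ incremental rounds.

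First I would handle the initial round, which runs Algorithm~\ref{alg:delaunaypar} (the non-write-efficient BGSS algorithm) on the first $n_0 = n/\log_2 n$ randomly ordered points and simultaneously records the tracing structure. Since BGSS uses $O(n_0\log n_0)$ work and writes, and runs in $O(\log^2 n_0)=O(\log^2 n)$ depth \whp{}, this contributes $O(n)$ writes (hence $O(\omega n)$ write-cost), $O(n)$ additional work, and $O(\log^2 n)$ depth.

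Next I would analyze a generic incremental round $i\in\{1,\dots,\lceil\log_2\log_2 n\rceil\}$, in which $m_i = 2^{i-1}n/\log_2 n$ new points are inserted into a configuration that already contains $m_i$ points. The round has two phases. Phase one invokes the DAG-tracing algorithm to locate, for each new point, the encroached leaf triangles of the current tracing structure; by Lemma~\ref{lem:delaunay-tracing} this takes $O(m_i\log m_i + \omega m_i)$ work, $O(m_i)$ writes, and $O(\log^2 n)$ depth \whp{}. Phase two groups the points by their encroached triangles using a linear-work, $O(\log^2 m_i)$-depth semisort~\cite{gu2015top}, and then runs Algorithm~\ref{alg:delaunaypar} restricted to those conflicts while extending the tracing structure. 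Because each new point is now at a known triangle, the only writes in this phase are to the newly created triangles in the DT and the corresponding nodes added to the tracing structure; since a random incremental DT has an expected constant number of triangles per inserted point, the expected write count is $O(m_i)$. The depth of this phase is bounded by the BGSS depth analysis: the dependence DAG over the $2m_i$ points has depth $O(\log n)$ \whp{}, and each of its rounds runs in $O(\log n)$ depth, giving $O(\log^2 n)$ depth \whp{}.

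Summing over rounds yields the bounds in the theorem. The write counts form the geometric series $O(n_0)+\sum_i O(m_i) = O(n)$, so the total write-cost is $O(\omega n)$. The work is $\sum_i O(m_i\log m_i) = O(n\log n)$ plus the $O(\omega n)$ write charge. For depth, there are $1+O(\log\log n)$ rounds each of depth $O(\log^2 n)$ \whp{}, so the overall depth is $O(\log^2 n\log\log n)$ \whp{}.

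The main obstacle is phase two of an incremental round: one must argue that re-running Algorithm~\ref{alg:delaunaypar} starting from the already-located encroached triangles does not incur the $\Theta(\log n)$ extra writes per point that the original BGSS algorithm does when points trickle down the DAG. This is handled by the fact that phase one has already delivered each point to its encroached triangles, so the only writes are those to actually newly created DT triangles (constant per inserted point in expectation) and their corresponding tracing-structure copies (amortized constant because in-degrees are constant by construction). Once this is established, the randomized analyses of BGSS~\cite{blelloch2016parallelism} for dependence-DAG depth and of Lemma~\ref{lem:delaunay-tracing} for tracing plug in directly to complete the proof.
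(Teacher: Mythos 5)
Your overall decomposition (initial round on $n/\log_2 n$ points, then $O(\log\log n)$ incremental rounds, each split into a tracing phase handled by Lemma~\ref{lem:delaunay-tracing} plus a semisort and a run of Algorithm~\ref{alg:delaunaypar}, with depth $O(\log^2 n)$ per round) is the same as the paper's, and the initial-round and depth accounting are fine. The gap is in your write bound for phase two of an incremental round. You claim that because phase one has already delivered each new point to its encroached triangles, ``the only writes in this phase are to the newly created triangles in the DT and the corresponding tracing-structure copies.'' That is not true: phase one locates each point only in the triangulation of the \emph{previously inserted} $m_i$ points. During phase two the $m_i$ new points are themselves inserted over several rounds of Algorithm~\ref{alg:delaunaypar}, and Line~\ref{line:del-incircle} keeps moving every still-uninserted point of the current batch into the encroached sets of triangles created by points of the \emph{same} batch that are processed earlier. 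Each such relocation is a write, and nothing in your argument bounds how many there are; a priori each point could trickle through $\Theta(\log n)$ levels exactly as in plain BGSS.

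What actually controls this cost---and is the quantitative reason prefix doubling helps---is the backwards-analysis dependence bound: the probability that a point at position $j$ in the random order encroaches a triangle created by the point at position $i<j$ is $O(1/i)$ (the paper uses $24/i$). Since every point of the current batch has global index larger than $m_i$, the expected number of within-batch relocations is
\[
\mathbb{E}[C] \;\le\; \sum_{i=m_i+1}^{2m_i}\ \sum_{j=i+1}^{2m_i} \frac{24}{i} \;=\; O(m_i),
\]
i.e., $O(1)$ per point rather than $O(\log n)$, precisely because the harmonic-type sum starts at $i=m_i+1$. This is the step your proposal is missing (your closing sentence defers to ``the randomized analyses of BGSS,'' but the specific statement needed is this within-batch dependence count, not the DAG-depth bound or Lemma~\ref{lem:delaunay-tracing}, which you already invoke). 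With that bound in place, phase two costs $O(m_i)$ expected reads and writes beyond the tracing phase, and the rest of your summation over rounds goes through as you wrote it.
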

\begin{proof}
  The original Algorithm~\ref{alg:delaunaypar}
  in~\cite{blelloch2016parallelism} has $O(\log^2 n)$ depth \whp{}.
  In the prefix-doubling approach, the depth of each round is no more
  than $O(\log^2 n)$, and the algorithm has $O(\log \log n)$ rounds.
  The overall depth is hence $O(\log^2 n\log \log n)$ depth \whp{}.

  The work bound consists of the costs from the \initround{}, and the
  \incround{s}.  The \initround{} computes the triangulation of the
  first $n/\log_2 n$ points, using at most $O(n)$ \incircle{} tests,
  $O(n)$ writes and $O(\wcost{}n)$ work.  For the \incround{s}, we
  have two components, one for locating encroached triangles in the
  tracing structure, and one for applying
  Algorithm~\ref{alg:delaunaypar} on those points to build the next
  tracing structure.    The first part is handled by
  Lemma~\ref{lem:delaunay-tracing}.
  For the second part we
we can apply a similar analysis to Theorem~4.2
of~\cite{blelloch2016parallelism}.
In particular, the probability that there is a dependence from
a triangle in the $i$'th point (in the random order) to a triangle
added by a later point at location $j$ in the ordering is upper bounded by $24/i$.
Summing across all points in the second half (we have already
resolved the first half) gives:
\[\mathbb{E}[C] \leq \sum_{i=m+1}^{2m} \sum_{j=i+1}^{2m} 24/i =
O(m)\; .\]
This is a bound on both the number of reads and the number of writes.
Since the points added in each round doubles, the cost is
dominated by the last round, which is $O(n \log n)$ reads and $O(n)$
writes, both in expectation.   Combined with the cost of the \initround{} gives the stated bounds.
\end{proof}

\section{Space-Partitioning Data Structures}\label{sec:kdtree}

Space partitioning divides a space into non-overlapping regions.\footnote{The other type of partitioning is object partitioning that subdivides the set of objects directly (e.g., R-tree~\cite{guttman1984r,manolopoulos2010r}, bounding volume hierarchies~\cite{gu2013efficient,wald2007ray}).}
This process is usually applied repeatedly until the number of objects in a region is small enough, so that we can afford to answer a query in linear work within the region.
We refer to the tree structure used to represent the partitioning as the space-partitioning tree.
Commonly-used space-partitioning trees include binary space partitioning trees, quad/oct-trees, \kdtree{s}, and their variants, and are widely used in computational geometry~\cite{DCKO08,har2011geometric}, computer graphics~\cite{akenine2008real}, integrated circuit design, learning theory, etc.

In this section, we propose write-efficient construction and update
algorithms for \kdtree{}s~\cite{bentley1975multidimensional}. We
discuss how to support dynamic updates write-efficiently in
Section~\ref{sec:kdupdate}, and we discuss how to apply our
technique to other space-partitioning trees in Section~\ref{sec:kdtree-ext}.



\subsection{$k$-d Tree Construction and Queries}\label{sec:kdconstruct}

\label{sec:kdconstruct}
\kdtree{s} have many variants that facilitate different queries.
We start with the most standard applications on range queries and nearest neighbor queries, and discussions for other queries are in Section~\ref{sec:kdtree-ext}.
A range query can be answered in $O(n^{(k-1)/k})$ worst-case work, and an approximate $(1+\epsilon)$-nearest neighbor (ANN) query requires $\log n\cdot O(1/\epsilon)^k$ work assuming bounded aspect ratio,\footnote{The largest aspect ratio of a tree node on any two dimensions is bounded by a constant, which is satisfied by the input instances in most real-world applications.}
both in $k$-dimensional space.
The tree to achieve these bounds can be constructed by always partitioning by the median of all of the objects in the current region either on the longest dimension of the region or cycling among the $k$ dimensions. The tree has linear size and $\log_2 n$ depth~\cite{DCKO08}, and can be constructed using $O(n\log n)$ reads and writes.
We now discuss how to reduce the number of writes to $O(n)$.



One solution is to apply the incremental construction by inserting the objects into a \kdtree{} one by one.
This approach requires linear writes, $O(n\log n)$ reads and polylogarithmic depth.
However, the splitting hyperplane is no longer based on the median, but the object with the highest priority pre-determined by a random permutation. 
The expected tree depth can be $c\log_2 n$ for $c>1$, but to preserve the range query cost we need the tree depth to be $\log_2n +O(1)$
(see details in Lemma \ref{lem:kdheight}).
Motivated by the incremental construction, we propose the following variant, called \batched{p}, which guarantees both write-efficiency and low tree depth.

\myparagraph{The \batched{p}}
The \batched{p} is a variant of the classic incremental construction where the dependence graph is a tree.
Unlike the classic version, where the splitting hyperplane (splitter) of a tree node is immediately set when inserting the object with the highest priority, in the \batchedshort{p} version, each leaf node will buffer at most $p$ objects before it determines the splitter.
We say that a leaf node \emph{overflows} if it holds more than $p$ objects in its buffer.
We say that a node is \emph{generated} when created by its parent, and \emph{settled} after finding the splitters, creating leaves and pushing the objects to the leaves' buffers.

The algorithm proceeds in rounds, where in each round it first finds the corresponding leaf nodes that the inserted objects belong to, and adds them into the buffers of the leaves. 
Then it settles all of the overflowed leaves, and starts a new round.
An illustration of this algorithm is shown in Figure~\ref{fig:kdtree}.
After all objects are inserted, the algorithm finishes building the subtree of the tree nodes with non-empty buffers recursively.
For write-efficiency, we require the \smallmem{} size to be $\Omega(p)$, and the reason will be shown in the cost analysis.

\begin{figure}
\begin{center}
  \includegraphics[width=.6\columnwidth]{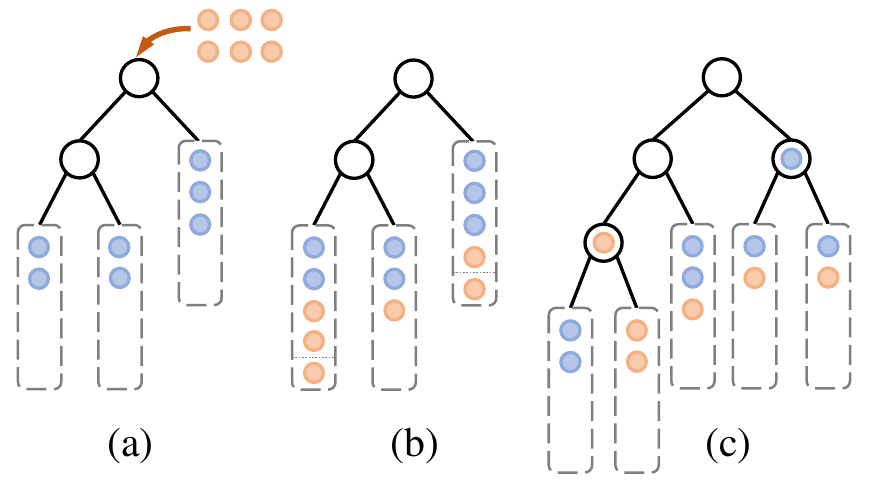}
\end{center}\vspace{-1.5em}
\caption[An illustration of the \batched{p}.]
{An illustration of one round in the \batched{p} for $p=4$.
Subfigure (a) shows the initial state of this round.
Then the new objects (shown in orange) are added to the buffers in the leaves, as shown in subfigure (b).
Two of the buffers overflow, and so we settle these two leaves as shown in subfigure (c).
}
\label{fig:kdtree}
\end{figure}



We make a partition once we have gathered $p$ objects in the corresponding subregion based on the median of these $p$ objects.
When $p=1$, the algorithm is the incremental algorithm mentioned above, but the range query cost cannot be preserved.
When $p=n$, the algorithm constructs the same tree as the classic $k$-d tree construction algorithm, but requires more than linear writes unless the \smallmem{} size is $O(n)$, which is impractical when $n$ is large.
We now try to find the smallest value of $p$ that preserves the query cost, and we analyze the cost bounds accordingly.

\mytitle{Range query}
We use the following lemma to analyze the cost of a standard $k$-d range query (on an axis-aligned hypercube for $k\ge 2$).

\begin{lemma}\label{lem:kdheight}
A $k$-d range query costs $O(2^{((k-1)/k) h})$ using our \kdtree{} of height $h$.
\end{lemma}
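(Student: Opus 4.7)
The plan is to bound the query cost by counting the nodes whose regions are crossed by the boundary of the axis-aligned query hypercube, since a node entirely outside the query is never visited, and a node entirely inside contributes to the output and can be accounted for separately. Thus the dominant cost comes from the $2k$ (axis-aligned) half-spaces defining the query box: for each such bounding hyperplane $H$, I need to bound the number of tree nodes whose associated region is crossed by $H$.

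Fix such a hyperplane $H$ perpendicular to dimension $j$, and let $T(h)$ be the maximum number of nodes of a $k$-d tree of height $h$ whose region is crossed by $H$. I would analyze what happens over $k$ consecutive levels of the tree, using the fact that in our construction the splitting dimensions cycle through $1,2,\dots,k$. Exactly one of those $k$ levels splits along dimension $j$, and for that level $H$ lies entirely on one side of the split plane, so only one child's region can be crossed by $H$. For each of the remaining $k-1$ levels, the split is perpendicular to a dimension different from $j$, so $H$ crosses both children's regions. Hence after descending $k$ levels we obtain at most $2^{k-1}$ subtrees of height $h-k$ whose regions are crossed by $H$, plus $O(2^{k-1})$ work to account for visiting the intermediate nodes, giving
\[
T(h) \;\le\; 2^{k-1}\,T(h-k) + O(2^{k-1}).
\]
Unrolling the recurrence gives $T(h) = O\!\left(2^{(k-1)h/k}\right)$. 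Summing over the $O(k)$ bounding hyperplanes of the query (a constant for fixed $k$) yields the claimed bound $O\!\left(2^{((k-1)/k)h}\right)$ on the number of visited nodes, which is the cost of the range query.

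The main obstacle is justifying that the \batched{p} construction actually produces a tree in which the cyclic-dimension splitting argument above is valid: although medians are taken over the $p$ buffered objects rather than over all objects in the subtree, the dimension schedule is still cyclic, so the per-level case analysis of $H$ versus the splitting direction goes through unchanged. The balance of the tree affects only the relation between $h$ and $n$ (handled elsewhere by the argument that $h = \log_2 n + O(1)$ for suitable $p$); the statement of the lemma is parameterized by $h$ directly, so the recurrence-based counting argument above suffices.
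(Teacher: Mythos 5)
Your argument is essentially identical to the paper's proof sketch: both consider the $2k$ bounding half-spaces, observe that in each block of $k$ consecutive levels exactly one split is perpendicular to the relevant face and hence does not branch, yielding a factor $2^{k-1}$ per $k$ levels, and conclude $O(2^{((k-1)/k)h})$. You make the accounting slightly more explicit via the recurrence $T(h) \le 2^{k-1} T(h-k) + O(2^{k-1})$ and add a sensible remark that the $p$-batched construction preserves the cyclic dimension schedule, but the core reasoning is the same.
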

\begin{proof}[Proof Sketch]
A $k$-d range query has at most $2k$ faces that generate $2k$ half-spaces, and we analyze the query cost of each half-space.
Since each axis is partitioned once in every $k$ consecutive levels, one side of the partition hyperplane perpendicular to the query face will be either entirely in or out of the associated half-space.
We do not need to traverse that subtree (we can either directly report the answer or ignore it). Therefore every $k$ levels will expand the search tree by a factor of at most $2^{k-1}$.
Thus the query cost is $O(2^{((k-1)/k) h})$.
\end{proof}

\begin{lemma}\label{lem:kdp}
For our \batchedshort{p} \kdtree{}, $p=\Omega(\log^3 n)$ guarantees the tree height to be no more than $\log_2 n+O(1)$ whp.
\end{lemma}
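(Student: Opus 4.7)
The plan is to reduce the height bound to a concentration statement about sample medians and then chain these bounds along every root-to-leaf path. The crucial structural observation is that at the moment a node with region $R$ is settled, the $p$ buffered objects are a uniformly random sample from the set of all input points destined for $R$. This follows by conditioning on the splitters chosen at all proper ancestors of the node: once those splitters are fixed, the subset of inputs destined for $R$ is determined, and by the exchangeability of a uniformly random input permutation, the order in which those inputs arrive is still uniform, so the first $p$ to arrive form a uniform random sample.

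Armed with uniform sampling, I would invoke the standard Chernoff bound for the hypergeometric distribution to show that the sample median has true rank within $(1/2 \pm \delta)|R|$ except with probability $2\exp(-\Omega(\delta^2 p))$. Choosing $\delta = c/\log n$ and using $p = \Omega(\log^3 n)$, each split's failure probability becomes $n^{-c'}$ for an arbitrarily large constant $c'$. Since there are at most $n/p$ settlement events in the main (batched) tree, a union bound guarantees that every settled split is $(1/2 + O(1/\log n))$-balanced, with high probability.

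Conditioning on this event, I would bound the depth by multiplying size-reduction factors along any root-to-leaf path. After $h$ splits, the destined-subtree size is at most $n \prod_{i=1}^{h}(1/2 + \delta_i) \le n \cdot 2^{-h} \cdot \exp\!\bigl(2 \sum_i \delta_i\bigr)$; since $h = O(\log n)$ and $\delta_i = O(1/\log n)$, the exponential factor is $O(1)$. Hence the main tree reaches subtree size $\le p$ after at most $\log_2(n/p) + O(1)$ levels. The finishing phase then builds each remaining buffer of size at most $p$ into a balanced $k$-d subtree by standard median-finding, contributing at most $\log_2 p + O(1)$ further depth. Summing, the total height is $\log_2 n + O(1)$.

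The main obstacle I foresee is the uniform-sample claim, since the region $R$ and the very identity of the points destined for it are themselves random objects determined by earlier splits. The conditioning argument sketched above is what makes this rigorous: once all ancestor splitters are fixed, the set bound for $R$ is deterministic, and exchangeability of the original permutation implies that its arrival order is still uniform, so the first $p$ form a uniform sample independent of the ancestors' randomness. A minor subtlety is to keep the union bound over only the $n/p$ settlement events (not all $O(n)$ final tree nodes), which is easily absorbed by choosing the Chernoff constants generously.
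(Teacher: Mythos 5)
Your proposal is correct and follows essentially the same route as the paper's proof: a Chernoff-type bound showing each $p$-sample median splits its region within a $(1/2\pm O(1/\log n))$ factor with polynomially small failure probability (forcing $p=\Omega(\log^3 n)$), then multiplying these factors along a path to reach subtrees of size at most $p$ within $\log_2(n/p)+O(1)$ levels, plus $\log_2 p+O(1)$ for the final subtrees. The only difference is that you make explicit the uniform-sampling/exchangeability justification and the union bound over settlement events, which the paper leaves implicit.
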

\begin{proof}
We now consider the \batched{p}.
Since we are partitioning based on the median of $p$ random objects, the hyperplane can be different from the actual median.
To get the same cost bound, we want the actual number of objects on the two sides to differ by no more than a factor of $\epsilon$ \whp{}.
Since we pick $p$ random samples, by a Chernoff bound the probability that more than $1/2p$ samples are within the first $(1/2-\epsilon/4)n$ objects is upper bounded by $e^{-p\epsilon^2/24}$.
Hence, the probability that the two subtree weights of a tree node differ by more than a factor of $\epsilon$ is no more than $2e^{-p\epsilon^2/24}$.
This $\epsilon$ controls the tree depth, and based on the previous analysis we want to have $n(\frac{1}{2}+\frac{\epsilon}{4})^{\log_2 n/p+O(1)}< p$.
Namely, we want the tree to have no more than $\log_2 n/p+O(1)$ levels \whp{} to reach the subtrees with less than $p$ elements, so the overall tree depth is bounded by $\log_2 n/p+O(1)+\log_2 p=\log_2 n+O(1)$.
Combining these constraints leads to $\epsilon=O(1)/\log_2 n$ and $p=\Omega(\log^3 n)$.
\end{proof}

Lemma~\ref{lem:kdp} indicates that setting $p=\Omega(\log^3 n)$ gives
a tree height of $\log_2 n+O(1)$ \whp{}, and Lemma~\ref{lem:kdheight} shows
that the corresponding range query cost is $O(2^{((k-1)/k)
  (O(1)+\log_2 n)})=O(n^{(k-1)/k})$, matching the standard range query
cost.

\mytitle{ANN query}
If we assume that the input objects are well-distributed and the \kdtree{} satisfies the bounded aspect ratio,
then the cost of a $(1+\epsilon)$-ANN query is proportional to the tree height.
As a result, $p=\Omega(\log n)$ leads to a query cost of $\log n  \cdot O(1/\epsilon)^k$ \whp{}.\footnote{Actually the tree depth is $O(\log n)$ even when $p=1$.  However, for write-efficiency, we need $p=\Omega(\log n)$ to support efficient updates as discussed in Section~\ref{sec:kdupdate} that requires the two subtree sizes to be balanced at every node.}

\mytitle{Parallel construction and cost analysis}
To get parallelism, we use the prefix-doubling approach, starting with $n/\log n$ objects in the first round.
The number of reads of the algorithm is still $\Theta(n\log n)$, since it is lower bounded by the cost of sorting when $k=1$, and upper bounded by $O(n\log n)$ since the modified algorithm makes asymptotically no more comparisons than the classic implementation.
We first present the following lemma.
\begin{lemma}\label{lem:kdtree-overflow}
When a leaf overflows at the end of a round, the number of objects in its buffer is $O(p)$ \whp{} when $p=\Omega(\log n)$.
\end{lemma}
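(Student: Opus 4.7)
The plan is to argue, for each leaf separately and then by a union bound, that the number of objects routed into its buffer during a single round is concentrated around an expectation of at most $O(p)$. Since at the end of the round each leaf's buffer equals whatever it started with plus the additions, the claim follows.

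Fix an incremental round in which $m$ objects have already been inserted and $m$ more are added by prefix doubling. Consider any leaf $L$ present at the start of the round, with region $R_L$ and buffer size $m_L \le p$ (because $L$ was not overflowed at the end of the previous round). Let $T_L$ denote the total number of input points lying in $R_L$, and let $Y$ denote the number of this round's insertions that land in $R_L$. It suffices to show $Y = O(p)$ whp, since then the final buffer holds $m_L + Y = O(p)$ objects.

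The analysis applies two hypergeometric Chernoff bounds. Because the $N$ input objects are processed in a uniformly random order, $m_L$ conditioned on $T_L$ is hypergeometric with mean $\mu_1 := m T_L / N$. A lower-tail Chernoff bound gives $\Pr[m_L \le \mu_1/2] \le e^{-\Omega(\mu_1)}$. The contrapositive shows that the observed event $m_L \le p$ forces $\mu_1 = O(p)$, except with probability $e^{-\Omega(p)} = n^{-\Omega(1)}$, using $p = \Omega(\log n)$. Conditioned on $T_L$ and $m_L$, the random variable $Y$ is hypergeometric over the remaining $N-m$ positions with mean at most $m T_L / (N-m) \le 2 \mu_1 = O(p)$, since $m \le n/2$ in every incremental round and hence $N - m \ge m$. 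An upper-tail Chernoff bound (again valid because $\mathbb{E}[Y] = \Omega(\log n)$ is within reach of $p = \Omega(\log n)$) then yields $Y = O(p)$ whp.

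Finally, I would union bound over all leaves ever created during the $O(\log \log n)$ rounds. The algorithm creates $O(n/p)$ leaves in total, since every settlement operation processes $\Omega(p)$ objects and the tree has linear size. Each leaf's failure probability is $n^{-c}$ for a constant $c$ tunable via the constant hidden in $p = \Omega(\log n)$, so the union bound preserves the high-probability guarantee. The main obstacle is the entanglement between the random region $R_L$ and the random permutation that generates it; I would handle this by conditioning on the tree state at the start of the round (which determines both $R_L$ and $m_L$ as functions of the first $m$ objects) before invoking the concentration bounds on the fresh insertions of the current round.
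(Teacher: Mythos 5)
Your overall plan matches the paper's (the paper only gives a brief sketch): fix a leaf that held fewer than $p$ buffered objects at the start of the round, use the randomness of the insertion order to argue that the fresh insertions of this round add only $O(p)$ objects to it, and union bound over the $O(n/p)$ leaves and $O(\log\log n)$ rounds. Your second concentration step is sound: conditioned on the tree state at the start of the round, the $m$ new insertions are a uniformly random size-$m$ subset of the $N-m$ remaining points, so $Y$ is hypergeometric and the upper-tail bound applies. Your extra first step---deducing from $m_L\le p$ that the leaf's region contains only $O(pN/m)$ input points---is exactly the ingredient the paper's sketch leaves implicit, so you are right that it is needed.

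However, the justification of that first step has a genuine gap. The claim that ``$m_L$ conditioned on $T_L$ is hypergeometric with mean $mT_L/N$'' is not valid as stated, because $R_L$ (and hence $T_L$) is itself a function of the random prefix: the leaf exists precisely because the prefix points arranged themselves in a particular way, so the prefix is not an unbiased sample relative to the realized region $R_L$. Your proposed remedy---conditioning on the tree state at the start of the round---does not repair this step: after that conditioning, $R_L$, $T_L$, and $m_L$ are all deterministic, so there is no randomness left to which a lower-tail bound on $m_L$ could be applied; the conditioning only legitimizes the second step. The standard fix is a union bound over \emph{candidate} regions rather than realized leaves: every possible leaf region is an axis-aligned box whose at most $2k$ facet coordinates are coordinates of input points, so there are $n^{O(k)}$ candidate boxes; for any fixed box containing at least $CpN/m$ input points, the probability that a random prefix of size $m$ contains at most $p$ of them is $e^{-\Omega(Cp)}$ by a hypergeometric lower-tail bound, and taking $p\ge c\log n$ for a sufficiently large constant $c$ (depending on the dimension $k$) the union over all candidate boxes still leaves failure probability $n^{-\Omega(1)}$. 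With that statement in hand---with high probability every leaf whose buffer holds at most $p$ points lies in a region with $O(pN/m)$ total points---your second hypergeometric bound on $Y$ and the final union bound over leaves and rounds go through as you describe, and the lemma follows.
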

\begin{proof}[Proof Sketch]
In the previous round, assume $n'$ objects were in the tree. At that time no more than $p-1$ objects are buffered in this leaf node.
Then in the current round another $n'$ objects are inserted, and by a Chernoff bound, the probability that the number of objects falling into this leaf node is more than $(c+1)p$ is at most $e^{-{c^2p/2}}$. Plugging in $p=\Omega(\log n)$ proves the lemma.
\end{proof}

We now bound the parallel depth of this construction.
The initial round runs the standard construction algorithm on the first $n/\log_2 n$ objects, which requires $O((\log^2 p+\log n)\log n)=O(\log^2 n)$ depth.
Then in each of the next $O(\log \log n)$ \incround{s},
we need to locate leaf nodes and a parallel semisort to put the objects into their buffers.
Both steps can be done in $O(\log^2 n)$ depth \whp{}~\cite{gu2015top}.
Then we also need to account for the depth of settling the leaves after the incremental rounds.
When a leaf overflows, by Lemma~\ref{lem:kdtree-overflow} we need to split a set of $O(p)$ objects for each leaf, which has a depth of $O(\log^2 p)=O(\log \log n)$ using the classic approach, and is applied for no more than a constant number of times \whp{} by Lemma~\ref{lem:kdtree-overflow}.


We now analyze the number of writes this algorithm requires. The initial round requires $O(n)$ writes as it uses a standard construction algorithm on $n/\log_2 n$ objects. In the incremental rounds,
$O(1)$ writes \whp{} are required for each object to find the leaf node it belongs to and add itself to the buffer using semisorting~\cite{gu2015top}.
From Lemma~\ref{lem:kdtree-overflow}, when finding the splitting hyperplane and splitting the object for a tree node, the number of writes required is $O(p)$ \whp{}.
Note that after a new leaf node is generated from a split, it contains at least $p/2$ objects.
Therefore, after all \incround{s}, the tree contains at most $O(n/p)$ tree nodes, and the overall writes to generate them is $O((n/p) \cdot p)=O(n)$.
After the \incround{s} finish, we need $O(n)$ writes to settle the leaves with non-empty buffers, assuming $O(p)$ cache size. 
In total, the algorithm uses $O(n)$ writes \whp{}. 

\begin{theorem}\label{the:kd-construct}
A \kdtree{} that supports range and ANN queries efficiently can be computed using $O(n \log n+\wcost{}n)$ expected work (i.e., $O(n)$ writes) and $O(\log^2 n)$ depth \whp{} in the \anp{} model.  For range query the \smallmem{} size required is $\Omega(\log^3 n)$.
\end{theorem}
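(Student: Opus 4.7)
The proof will essentially assemble the pieces that have already been developed in this section. I will fix the batch parameter $p$, then separately argue query correctness, work, writes, and depth.

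First I would set $p=\Theta(\log^3 n)$, which is the stronger of the two requirements (range queries need $\Omega(\log^3 n)$ by Lemma~\ref{lem:kdp}; ANN needs only $\Omega(\log n)$). By Lemma~\ref{lem:kdp} this guarantees tree height $\log_2 n + O(1)$ \whp{}, and plugging into Lemma~\ref{lem:kdheight} immediately yields the claimed $O(n^{(k-1)/k})$ range query bound. For ANN, bounded aspect ratio plus the $O(\log n)$ height gives the $\log n\cdot O(1/\varepsilon)^k$ bound. This handles the ``supports queries efficiently'' part of the statement.

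Next I would bound the total work by separately accounting for reads and writes. For reads, the algorithm performs asymptotically no more comparisons than the classical $k$-d tree construction, and is lower-bounded by sorting when $k=1$, so the read count is $\Theta(n\log n)$. For writes, I would split the bookkeeping into three pieces: (i) the initial round, which runs the standard construction on the first $n/\log_2 n$ objects and therefore costs $O(n)$ writes; (ii) the $O(\log\log n)$ incremental rounds, where each inserted object pays $O(1)$ writes \whp{} to traverse the tracing structure to its leaf and to semisort itself into that leaf's buffer~\cite{gu2015top}; and (iii) the work of settling overflowed leaves. For (iii), Lemma~\ref{lem:kdtree-overflow} tells us each settling event touches $O(p)$ objects and produces children that each absorb at least $p/2$ of them, so settling is ``amortized against'' the objects it consumes---over the entire algorithm there are at most $O(n/p)$ settled internal nodes, each costing $O(p)$ writes, for a total of $O(n)$. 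A final cleanup pass settles every leaf with nonempty buffer, again $O(n)$ writes. Multiplying writes by $\wcost{}$ and adding reads gives the $O(n\log n + \wcost{} n)$ work bound.

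For depth, the initial round runs a standard construction on $n/\log_2 n$ elements and has depth $O(\log^2 n)$. Each incremental round consists of (a) a DAG-tracing phase down a tree of height $\log_2 n + O(1)$ with depth $O(\log n)$, (b) a semisort of the new objects into buffers in $O(\log^2 n)$ depth \whp{}~\cite{gu2015top}, and (c) the settling of overflowed leaves, each requiring the median-based split of $O(p)$ elements in $O(\log^2 p)=O(\log^2\log n)$ depth by Lemma~\ref{lem:kdtree-overflow}. So each round is $O(\log^2 n)$ deep, and combined with the initial and final cleanup rounds---whose depths are also $O(\log^2 n)$---the overall depth is $O(\log^2 n)$ \whp{}.

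The main obstacle I anticipate is making the $O(n)$ write bound airtight: one has to argue that the overflow-splitting work telescopes correctly even though a single leaf can be re-touched across multiple incremental rounds, and that the $\Omega(p)$ small-memory assumption really does let each settling event be performed without spilling intermediate state to large memory. Once the ``each settled node is paid for by the $\geq p/2$ objects it retains'' argument is nailed down, combining it with Lemma~\ref{lem:kdtree-overflow} finishes the accounting, and the theorem follows.
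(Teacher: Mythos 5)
Your proposal matches the paper's own proof essentially step for step: the same choice $p=\Theta(\log^3 n)$ combined with Lemmas~\ref{lem:kdp} and~\ref{lem:kdheight} for query efficiency, the same $\Theta(n\log n)$ read bound by comparison counting, the same write accounting (standard construction on the first $n/\log_2 n$ objects, $O(1)$ writes per object per incremental round via semisorting, $O(p)$ writes per settled node with $O(n/p)$ such nodes since each new leaf retains at least $p/2$ objects, and a final cleanup pass under the $\Omega(p)$ \smallmem{} assumption), and the same per-round depth bounds using Lemma~\ref{lem:kdtree-overflow}. The one loose point you share with the paper is the final depth summation ($O(\log\log n)$ incremental rounds each of depth $O(\log^2 n)$ being reported as $O(\log^2 n)$ total), so your argument is faithful to the paper's proof rather than diverging from it.
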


\subsection{$k$-d Tree Dynamic Updates}\label{sec:kdupdate}

Unlike many other tree structures, we cannot rotate the tree nodes in \kdtree{s} since each tree node represents a subspace instead of just a set of objects.
Deletion is simple for \kdtree{s}, since we can afford to reconstruct the whole structure from scratch when a constant fraction of the objects in the \kdtree{} have been removed, and before the reconstruction we just mark the deleted node (constant reads and writes per deletion via an unordered map).
In total, the amortized cost of each deletion is $O(\wcost{}+\log n)$.
For insertions, we discuss two techniques that optimize either the update cost or the query cost.

\myparagraph{Logarithmic reconstruction~\cite{overmars1983design}}
We maintain at most $\log_2 n$ \kdtree{s} of sizes that are increasing powers of $2$.
When an object is inserted, we create a \kdtree{} of size $1$ containing the object.
While there are trees of equal size, we flatten them and replace the two trees with a tree of twice the size.
This process keeps repeating until there are no trees with the same size.
When querying, we search in all (at most $\log_2 n$) trees.
Using this approach, the number of reads and writes on an insertion is $O(\log^2 n)$, and on a deletion is $O(\log n)$. The costs for range queries and ANN queries are $O(n^{(k-1)/k})$ and $\log^2 n\cdot O(1/\epsilon)^k$ respectively, plus the cost for writing the output. 


If we apply our write-efficient \batchedshort{p} version when reconstructing the \kdtree{s}, we can reduce the writes (but not reads) by a factor of $O(\log n)$ (i.e., $O(\log n)$ and $O(1)$ writes per update).

When using logarithmic reconstruction, querying up to $O(\log n)$ trees can be inefficient in some cases, so here we show an alternative solution that only maintains a single tree.

\myparagraph{Single-tree version}
As discussed in Section~\ref{sec:kdconstruct}, only the tree height affects the costs for range queries and ANN queries.
For range queries, Lemma~\ref{lem:kdp} indicates that the tree height should be $\log_2 n+O(1)$ to guarantee the optimal query cost.
To maintain this, we can tolerate an imbalance between the weights of two subtrees by a factor of $O(1/\log n)$, and reconstruct the subtree when the imbalance is beyond the constraint.
In the worst case, a subtree of size $n'$ is rebuilt once after $O(n'/\log n)$ insertions into the subtree.
Since the reconstructing a subtree of size $n'$ requires $O(n'\log n'+\wcost{}n')$ work, each inserted object contributes $O(\log n\log n'+\wcost{}\log n)$ work to every node on its tree path, and there are $O(\log n)$ such nodes.
Hence, the amortized work for an insertion is $O(\log^3 n+\wcost{}\log^2 n)$.
For efficient ANN queries, we only need the tree height to be $O(\log n)$, which can be guaranteed if the imbalance between two subtree sizes is at most a constant multiplicative factor.
Using a similar analysis, in this case the amortized work for an insertion is $O(\log^2 n+\wcost{}\log n)$.

\subsection{Extension to Other Trees and Queries}\label{sec:kdtree-ext}
In Section~\ref{sec:kdconstruct} we discussed the write-efficient
algorithm to construct a \kdtree{} that supports range and ANN
queries.  \kdtree{s} are also used in many other queries in real-world
applications, such as ray tracing, collision detection for
non-deformable objects, $n$-body simulation, and geometric culling
(using BSP trees).  The partition criteria in these applications are
based on some empirical heuristics (e.g., the surface-area
heuristic~\cite{goldsmith1987automatic}), which generally work well on
real-world instances, but usually with no theoretical guarantees.

The \batched{p} can be applied to these heuristics, as long as each
object contributes linearly to the heuristic.  Let us consider the
surface-area heuristic~\cite{goldsmith1987automatic} as an example,
which does an axis-aligned split and minimizes the sum of the two
products of the subtree's surface area and the number of objects.
Instead of sorting the coordinates of all objects in this subtree and
finding the optimal split point, we can do approximately by splitting
when at least $p$ of the objects are inserted into a region.  When
picking a reasonable value of $p$ (like $O(\log^2 n)$ or $O(\log^3
n)$), we believe the tree quality should be similar to the exact
approach (which is a heuristic after all).  However, the
\batchedshort{p} approach does not apply to heuristics that are not
linear in the size of the object set.  Such cases happen in the
Callahan-Kosaraju algorithm~\cite{callahan1995decomposition} when the
region of each \kdtree{} node shrinks to the minimum bounding box, or
the object-partitioning data structures (like R-trees or bounding
volume hierarchies) where each object can contribute arbitrarily to
the heuristic.

\section{Augmented Trees}
\label{sec:augtree}
An \defn{augmented tree} is a tree that keeps extra data on each tree
node other than what is used to maintain the balance of this tree.
We refer to the extra data on each tree node as the \defn{augmentation}.
In this section, we introduce a framework that gives new algorithms
for constructing both static and dynamic augmented trees including
interval trees, 2D range trees,
and priority search trees that are parallel and write-efficient. Using these data
structures we can answer 1D stabbing queries, 2D range queries, and
3-sided queries (defined in
Section~\ref{sec:treeprelim}).
For all three problems, we assume that the query results need to be
written to the \largemem{}.  Our results are summarized in
Table~\ref{tab:treecost}.  We improve upon the traditional algorithms
in two ways. First, we show how to construct interval trees and
priority search trees using $O(n)$ instead of $O(n\log n)$ writes
(since the 2D range tree requires $O(n\log n)$ storage we cannot
asymptotically reduce the number of writes).  Second, we provide
a tradeoff between update costs and query costs
in the dynamic
versions of the data structures.  The cost
bounds are parameterized by $\alpha$.
By setting $\alpha=O(1)$ we achieve the same cost bounds as the traditional algorithms for queries
and updates.
$\alpha$ can be
chosen optimally if we know the update-to-query ratio $r$.  
For interval and priority trees, the optimal value of $\alpha$ is $\min(2+\wcost{}/r, \wcost{})$.
The overall work without considering writing the output can be improved by a factor of $\Theta(\log \alpha)$.
For 2D range trees, the optimal value of $\alpha$ is $2+\min(\wcost{}/r, \wcost{})/\log_2 n$.

\begin{table*}
\small
\def\arraystretch{1.1}
\begin{tabular}{rccc}
\toprule
&\textbf{Construction} &\textbf{Query}&\textbf{Update}\\
\midrule
\textbf{Classic interval tree}       &$O(\wcost n\log n)$   &$O(\wcost k+\log n)$& $O(\wcost{}\log n)$\\
\textbf{Our interval tree}           &$O(\wcost n+n\log n)$ &$O(\wcost k+\alpha\log_\alpha n)$&$O((\wcost +\alpha)\log_\alpha n)$ \\
\midrule
\textbf{Classic priority search tree}  &$O(\wcost n\log n)$  &$O(\wcost k+\log n)$& $O(\wcost{}\log n)$ \\
\textbf{Our priority search tree}      &$O(\wcost n+n\log n)$&$O(\wcost k+\alpha\log_\alpha n)$&$O((\wcost +\alpha)\log_\alpha n)$\\
\midrule
\textbf{Classic range Tree}   &$O(\wcost n\log n)$&$O(\wcost k + \log^2 n)$&$O((\log n+\wcost)\log n)$\\
\textbf{Our range tree}       &$O((\alpha+\wcost)\, n\log_\alpha n)$&$O(\wcost k + \alpha \log_{\alpha}n\,{\log n})$&$O(( \alpha \log n+\wcost )\log_{\alpha}n)$\\
\bottomrule
\end{tabular}
\caption{A summary of the work cost of the data structures discussed in Section~\ref{sec:augtree}.
In all cases, we assume that the tree contains $n$ objects (intervals or points).
For interval trees and priority search trees, we can reduce the number of writes in the construction from $O(\log n)$ per element to $O(1)$.
For dynamic updates, we can reduce the number of writes per update by a factor of $\Theta(\log\alpha)$ at the cost of increasing the number of reads in update and queries by a factor of $\alpha$ for any $\alpha\ge 2$.
}\label{tab:treecost}
\end{table*}


We discuss two techniques in this section that we use to achieve
write-efficiency.  The first technique is to decouple the tree
construction from sorting, and we introduce efficient algorithms to
construct interval and priority search trees in linear reads and
writes after the input is sorted. Sorting can be done in parallel and
write-efficiently (linear writes).
Using
this approach, the tree structure that we obtain is perfectly
balanced.


The second technique that we introduce is the \labeling{} technique.  We mark a
subset of tree nodes as \emph{\critical{}} nodes by a predicate function
parameterized by $\alpha$, and only maintain augmentations
on these \critical{} nodes.  We can then guarantee that every update
only modifies $O(\log_\alpha n)$  nodes, instead of $O(\log n)$
nodes as in the classic algorithms.  At a high level, the
\labeling{} is similar to the weight-balanced B-tree (WBB tree)
proposed by Arge et al.~\cite{arge1999two,arge2003optimal} for the
external-memory (EM) model~\cite{AggarwalV88}. However,
as we discuss in Section~\ref{sec:aug-dyn}, directly applying the EM
algorithms~\cite{arge1999two,arge2003optimal,ajwani2010geometric,sitchinava2012a,Sitchinava2012b} does not give us the desired bounds in our model.
Secondly, our underlying tree is still binary.
Hence, we mostly need no changes to the algorithmic part that dynamically maintains the augmentation in this trees, but just relax the balancing criteria so the underlying search trees can be less balanced.
An extra benefit of our framework is that bulk
updates can be supported in a straightforward manner.    Such bulk updates seem complicated and less obvious in previous approaches. 
We propose algorithms on our trees that can support bulk updates
write-efficiently and in polylogarithmic depth.



The rest of this section is organized as follows. We first provide the
problem definitions and review previous results in
Section~\ref{sec:treeprelim}. Then in Section~\ref{sec:presort}, we
introduce our post-sorted construction technique for constructing
interval and priority search trees using a linear number of
writes. Finally, we introduce the \labeling{} technique to support a
tradeoff in query and update cost for interval trees, priority search
trees, and range trees in Section~\ref{sec:aug-dyn}.


\subsection{Preliminaries and Previous Work}
\label{sec:treeprelim}
We define the \emph{weight} or \emph{size} of tree node or a subtree as the number of nodes in this subtree plus one.
The ``plus one'' guarantees that the size of a tree node is always the sum of the sizes of its two children, which simplifies our discussion.
This is also the standard balancing criteria used for weight-balanced trees~\cite{nievergelt1973binary}.

\myparagraph{Interval trees and the 1D stabbing queries}
An \emph{interval
  tree}\footnote{There exist multiple versions of interval trees. In
  this paper, we use the version described in~\cite{DCKO08}.}~\cite{DCKO08,edelsbrunner1980dynamic,mccreight1980efficient}
  organizes
  a set of $n$ intervals $S=\{s_i=(l_i,r_i)\}$ defined by their left and right
endpoints. The key on the root of the interval tree
is the median of the $2n$ endpoints.
This median divides all intervals into two categories: those completely on its
left/right, which then form the left/right subtrees recursively,
and those covering the median, which are stored in the root.
The intervals in the root are stored in two lists sorted by the left and right endpoints
respectively.
In this paper, we use red-black trees to maintain such ordered lists to support dynamic updates
and refer to them as the \emph{inner trees}. 
In the worst case, the
previous construction algorithms scan and copy $O(n)$ intervals in $O(\log n)$
levels, leading to $O(n\log n)$ reads and writes.

The interval tree can be used to answer a 1D stabbing query: given a set
of intervals, report a list of intervals covering the specific query point $p_q$.
This can be done by searching $p_q$ in the tree.
Whenever $p_q$ is smaller (larger) than the key of the current node,
all intervals in the current tree node with left (right) endpoints smaller than $p_q$ should be reported.
This can be done efficiently by scanning the list sorted by left (right) endpoints.
The overall query cost is $O(\wcost
k +\log n)$ (where $k$ is the output size).

\hide{ OLD VERSION:
Given a set
of $n$ intervals $S=\{s_i=(l_i,r_i)\}$ defined by their left and right
endpoints, a \emph{1D stabbing query} asks for the
list of intervals covering the specific query point $p_q$.

1D stabbing queries can be answered efficiently using \emph{interval
  trees}.\footnote{There exist multiple versions of interval trees. In
  this paper we use the version described in~\cite{DCKO08}.} For $n$
intervals, the interval tree chooses the median of the $2n$ endpoints
as the key on the root, which is also denoted as the \emph{splitter}.
Each interval either fully lies on one side of the splitter, or
overlaps the splitter.  For intervals that overlap the splitter, we
maintain them in two ordered lists on the node, one sorted by the left endpoint and one sorted by the right endpoint.  We then
recursively construct an interval tree for the intervals that fully on
the left (right) side of the splitter, making it the left (right)
child of the root. The base case is reached when no intervals lie
completely on one side of the splitter.  Note that each interval is
stored in exactly one tree node (with two copies in that node). To
support dynamic updates, the ordered lists of overlapping intervals
can be maintained using balanced binary search trees (BSTs), and we
refer to these BSTs as the \emph{inner trees}.  In the worst case, the
construction scans and copies $O(n)$ intervals in $O(\log n)$
level, leading to $O(n\log n)$ reads and writes.

For a stabbing query on $p_q$, we first compare $p_q$ with the root's
key. If $p_q$ is smaller (larger) than the root's key, then no
intervals in the right (left) subtree can cover $p_q$.  For the
intervals stored in the root node, we report all intervals that have
their left endpoints to the left of $p_q$.  This can be efficiently
supported by scanning the ordered list sorted by the left (right)
endpoints (or searching the corresponding inner tree), and the cost is
proportional to the output size.  We then continue the query in the
left (right) subtree recursively.  The overall query cost is $O(\wcost
k +\log n)$ (where $k$ is the output size), since we read $O(\log n)$
tree nodes and write out $k$ intervals.
}

\myparagraph{2D Range trees and the 2D range queries}
The \emph{2D range tree}~\cite{bentley1979decomposable} organizes a set of $n$ points $p=\{p_i =(x_i,y_i)\}$ on the
2D plane. It is a tree structure augmented with an inner tree, or equivalently, a two-level tree structure. The outer tree stores every point sorted by their $x$-coordinate.
Each node in the outer tree is augmented with an inner tree structure, which contains all the points in its subtree, sorted by their y-coordinate.

The 2D range tree can be used to answer the 2D range query:
given $n$ points in the 2D plane, report the list of points with $x$-coordinate
between $x_L$ and $x_R$, and $y$-coordinate between $y_B$ and $y_T$. Such range queries using range trees can be done by
two nested searches on $(x_L,x_R)$ in the outer tree and $(y_B,y_T)$ in at most $O(\log n)$ associated inner trees.
Using balanced BSTs for both the inner and outer trees, a range tree can be constructed with $O(n\log n)$ reads and writes, and each query takes $O(\log^2 n+k)$ reads and $O(k)$ writes (where $k$ is the output size). A range tree requires $O(n\log n)$ storage so the number of writes for construction is already optimal.

\hide{OLD VERSION
Given a set of $n$ points $p=\{p_i =(x_i,y_i)\}$ on the plane, the \emph{2D
range query} asks for the list of all points with $x$-coordinate
between $x_L$ and $x_R$, and $y$-coordinate between $y_B$ and $y_T$.

This query can be answered using the \emph{range tree}, which is a tree structure augmented with an inner tree, or equivalently, a two-level tree structure. The outer tree stores every point with their $x$-coordinates as the keys.
Each node in the outer tree is augmented with an inner tree structure, which contains all the points represented in its subtree, with their $y$-coordinates as the keys. 
To answer a range query, we conduct nested searches on $(x_L,x_R)$ in the outer tree and $(y_B,y_T)$ in at most $O(\log n)$ associated inner trees.
Using balanced binary search trees for both the inner and outer trees, a range tree can be constructed with $O(n\log n)$ reads and writes, and each query takes $O(\log^2 n+k)$ reads and $O(\wcost k)$ writes (where $k$ is the output size). A range tree requires $O(n\log n)$ storage so the number of writes for construction is already optimal.
}



\myparagraph{Priority search trees and 3-sided range queries}
The \emph{priority search tree}~\cite{McCreight85,DCKO08}
(priority tree for short) contains a set of $n$ points $p=\{p_i =(x_i,y_i)\}$ each with
a \emph{coordinate} ($x_i$) and a \emph{priority} ($y_i$). There are two
variants of priority trees, one is a search tree on
coordinates that keeps a heap order of the priorities as the augmented values~\cite{McCreight85,arge1999two}. 
The other one is a heap of the priorities, where each node is
augmented with a splitter between the left and right
subtrees on the coordinate dimension~\cite{McCreight85,DCKO08}.
The construction of both variants
uses $O(n\log n)$ reads and writes as shown in the original papers~\cite{McCreight85,DCKO08}.
For example, consider the second variant.
The root of a priority tree stores the point with the highest priority
in $p$.  All the other points are then evenly split into two sets by
the median of their coordinates which recursively form the
left and right subtrees. The
construction scans and copies $O(n)$ points in $O(\log n)$
levels, leading to $O(n\log n)$ reads and writes for the
construction.

Many previous results on dynamic priority search trees use the first
variant because it allows for rotation-based updates.  In this paper,
we discuss how to construct the second variant allowing
reconstruction-based updates, since it is a natural fit for our
framework.  We also show that bulk updates can be done
write-efficiently in this variant.  For the rest of this section, we
discuss the second variant of the priority tree.

The priority tree can be used to answer the 3-sided queries: given a set
of $n$ points, report all points with coordinates in the range
$[x_L,x_R]$, and priority higher than $y_B$. This can be done by
traversing the tree, skipping the subtrees whose coordinate range do
not overlap $[x_L,x_R]$, or where the priority in the root is lower
than $y_B$.  The cost of each query is $O(\wcost{}k+\log n)$ for an
output of size $k$~\cite{DCKO08}.

\hide{ OLD VERSION
Given a set of $n$ points $p=\{p_i =(x_i,y_i)\}$ on the plane
the \emph{3-sided range query} asks for the list of points with
$x$-coordinates in the range $[x_L,x_R]$, $y$-coordinates in the range
$[y_B,\infty)$. We will refer to the $y$-coordinate of a point as its
  \emph{priority}.

The \emph{priority search tree}~\cite{McCreight85,DCKO08} (priority
tree for short) can answer such queries efficiently.  There are two
variants of the priority trees, one is a search tree based on
$x$-coordinates and augments the tree node as a heap of the priorities
of the points~\cite{McCreight85,arge1999two}. 
The other variant is a heap of the priorities, and each node is
augmented with a splitter on the coordinate dimension, which
partitions the other points to the left and right
subtrees~\cite{McCreight85,DCKO08}.  The construction of both variants
uses $O(n\log n)$ reads and writes on $n$ points.


Most previous work uses the first variant because it allows for
rotation-based updates.  In this paper we discuss how to construct the
second variant of priority trees and perform updates for it, since it
is a natural fit for our framework.
In addition, we show that by using the second variant, bulk updates on
priority trees write-efficiently.
For the rest of this section, we refer to a priority tree as the
second variant described above.

The root of a priority tree stores the point with the highest priority
in $p$.  Then all the other points are evenly split into two sets by
the median of their $x$-coordinates, and they recursively form the
left and right subtrees.  The construction algorithm scans all points
to find the root, removes the corresponding point, finds the median of
the $x$-coordinates of the remaining points, splits them into two
sets, and recursively build the left and right subtrees.  The tree
contains $O(\log n)$ levels, and we read and write each point at most
once per level. In total we need $O(n\log n)$ reads and writes for the
construction.

To answer a 3-sided query, we traverse the tree, skipping the subtrees
whose $x$-coordinate range do not intersect with $[x_L,x_R]$, or whose
priority is lower than $y_B$.  The cost of each query is
$O(\wcost{}k+\log n)$ for an output of size $k$~\cite{DCKO08}.}


\hide{
\paragraph{$(\alpha,2\alpha)$-WBB Tree.} (..)
We use this technique in the pre-sorted interval tree, pre-sorted priority search tree and the range tree. It can reduce the number of writes per update at the cost of increasing the query time, hence is more applicable in the scenario where updates is significantly more frequent than queries.
Due to page limit we only explain in details for the dynamic range tree as an example.
}

\hide{
We summarize the cost of the previous work and our results in Table \ref{tab:treecost}.
\hide{
\paragraph{Augmented Tree.}
The \emph{augmented tree} structure is a search tree structure in which each node is associated with some extra information, probably reflecting the property of the subtree rooted at it. We note that the augmented value may depend on not only the entries in $u$'s subtree, but also $u$'s ancestors, siblings, etc. This may extend the definition of the augmented tree in most textbooks, but it is crucial for at least one of our applications (e.g., the interval tree and the balanced priority search tree).}
\hide{
In an augmented tree $t$, Assume we store in each node $u\in t$ an entry $e(u)\in E$, ordered by some comparison function $\prec$, an augmented tree structure will also keep track of some quantity $a(u)\in A$, called the \emph{augmented value} of $u$, which can be computed using an \emph{augmenting} function $f(t,u)$. We note that $f(t,u)$ may depend on not only the entries in $u$'s subtree, but also $u$'s ancestors, siblings, etc. This may extend the definition of the augmented tree in most textbooks, but it is crucial for at least one of our applications (e.g., the \tournament{}).
We accordingly represent an augmented tree structure using $\augt(E,\prec,A,f)$. We also use $\tree(E,\prec)$ to represent a non-augmented tree.}

\hide{\paragraph{Persistent tree using path-copying.} When implemented with path-copying, the tree structure can also be fully \emph{persistent} (or \emph{functional}), which means that any updates will not modify existing tree but create a new version, with no extra work asymptotically.}
}

\subsection{The Post-Sorted Construction}
\label{sec:presort}

For interval trees and priority search trees, the standard construction
algorithms~\cite{DCKO08,CLRS,edelsbrunner1980dynamic,mccreight1980efficient,McCreight85}
require $O(n\log n)$ reads and writes, even though the output is only
of linear size. This section describes algorithms for constructing
them in an optimal linear number of writes. Both algorithms first sort
the input elements by their $x$-coordinate in $O(\omega n+n\log n)$
work and $O(\log^2n)$ depth using the write-efficient comparison sort
described in Section~\ref{sec:inc-sorting}.
We now describe how to build the trees in $O(n)$ reads and writes given the sorted input.
For a range tree, since the standard tree has $O(n\log n)$ size, the classic construction algorithm is already optimal.

\myparagraph{Interval Tree}
After we sort all $2n$ coordinates of the endpoints, we can first build a perfectly-balanced binary search tree on the endpoints using $O(n)$ reads and writes and $O(\log n)$ depth.
We now consider how to construct the inner tree of each tree node.

We create a lowest common ancestor (LCA) data structure on the keys of
the tree nodes that allows for constant time queries. This can be
constructed in $O(n)$ reads/writes and $O(\log^2 n)$ depth~\cite{BV89,JaJa92}.
Each interval can then find the tree node that it belongs to using an LCA
query on its two endpoints.  We then use a radix sort on the $n$
intervals.
The key of an interval is a pair with the first value being the index
of the tree node that the interval belongs to, and the second value
being the index of the left endpoint in the pre-sorted order. The sorted result gives the interval list for each tree node sorted by left endpoints. We
do the same for the right endpoints. This step takes $O(n)$ reads/writes
overall. Finally, we can construct the inner trees from the sorted
intervals in $O(n)$ reads/writes across all tree nodes.

Parallelism is straightforward for all steps except for the radix
sort.  The number of possible keys can be $O(n^2)$, and it is not known
how to radix sort keys from such a range work-efficiently and in
polylogarithmic depth. However, we can sort a range of $O(n\log n)$ in $O(\wcost n)$ expected work and $O(\log^2 n)$ depth \whp{}~\cite{RR89}.
Hence our goal is to limit the first value into a $O(\log n)$ range.
We note that given the left endpoint of an interval, there are only $\log_2 (2n)$ possible locations for the tree node (on the tree path) of this interval. Therefore instead of using the tree node as the first value of the key, we use the level of the tree node, which is in the range $[1,\ldots, O(\log n)]$.
By radix sorting these pairs, so we have the sorted intervals (based on left or right endpoint) for each level. We observe that the intervals of each tree node are consecutive in the sorted interval list per level. This is because
for tree nodes $u_1$ and $u_2$ on the same level where $u_1$ is to the left of $u_2$, the endpoints of $u_1$'s intervals must all be to the left of $u_2$'s intervals.
Therefore, in parallel we can find the first and the last intervals of each node in the sorted list, and construct the inner tree of each node.
Since the intervals are already sorted based on the endpoints, we can build inner trees in $O(n)$ reads and writes and $O(\log^2 n)$ depth~\cite{blelloch2016just}.

\myparagraph{Priority Tree} In the original priority tree construction
algorithm, points are recursively split into sub-problems based on the
median at each node of the tree.  This requires $O(n)$ writes at each
level of the tree if we explicitly copy the nodes and pack out the
root node that is removed.  To avoid explicit copying, since the
points are already pre-sorted, our write-efficient construction
algorithm passes indices determining the range of points belonging to
a sub-problem instead of actually passing the points themselves.  To
avoid packing, we simply mark the position of the removed point in the
list as invalid, leaving a hole, and keep track of the number of valid
points in each sub-problem.

Our recursive construction algorithm works as follows. For a tree node, we
know the range of the points it represents, as well as the number of
valid points $n_v$.  We then pick the valid point with the highest
priority as the root, mark the point as invalid, find the median among
the valid points, and pass the ranges based on the median and number of
valid points (either $\lfloor(n_v-1)/2\rfloor$ or
$\lceil(n_v-1)/2\rceil$) to the left and right sub-trees, which are
recursively constructed.  The base case is when there is only one
valid point remaining, or when the number of holes is more than the
valid points.  Since each node in the tree can only cause one hole,
for every range corresponding to a node, there are at most $O(\log n)$
holes.  Since the size of the \smallmem{} is $\Omega(\log n)$, when
the number of valid points is fewer than the number of holes, we can
simply load all of the valid points into the \smallmem{} and construct
the sub-tree.

To efficiently implement this algorithm, we need to support three queries on the input list: finding the root, finding the $k$-th element in a range (e.g., the median), and deleting an element.
All queries and updates can be supported using a standard tournament tree where each interior node maintains the minimum element and the number of valid nodes within the subtree.
With a careful analysis, all queries and updates throughout the construction require linear reads/writes overall.
The details are provided in Appendix~\ref{sec:aug-app}.

The parallel depth is $O(\log^2 n)$---the bottleneck lies in removing the points. There are $O(\log n)$ levels in the priority tree and it costs $O(\log n)$ writes for removing elements from the tournament tree on each level.
For the base cases, it takes linear writes overall to load the points into the \smallmem{} and linear writes to generate all tree nodes.
The depth is $O(\log n)$.

We summarize our result in this section in Theorem \ref{thm:treebuild}.
\begin{theorem}
\label{thm:treebuild}
An interval tree or a priority search tree can be constructed with pre-sorted input in $O(\wcost{}n)$ expected work and $O(\log^2 n)$ depth \whp{} on the \anp{} model. 

\end{theorem}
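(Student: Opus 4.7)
The plan is to prove the two cases (interval tree and priority search tree) separately, since the excerpt has already laid out the algorithmic skeleton; the remaining work is to verify that each sub-step meets the claimed $O(\wcost{} n)$ work and $O(\log^2 n)$ depth bounds, and to assemble them.

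For the interval tree, I would proceed through the five sub-steps already identified. First, building the perfectly balanced BST on the $2n$ sorted endpoints costs $O(n)$ reads/writes and $O(\log n)$ depth, since we only walk down the implicit median structure. Second, I would invoke the $O(n)$-work, $O(\log^2 n)$-depth LCA preprocessing of~\cite{BV89,JaJa92}, which fits within the budget. Third, each interval uses its two endpoint indices to do one constant-time LCA query, giving $O(n)$ reads/writes and $O(1)$ depth. The delicate part is the fourth step: sorting intervals by the pair (tree node, left endpoint) so that the intervals belonging to each tree node come out contiguous and sorted. Writing the tree node as a full identifier would push the key range to $\Theta(n^2)$, outside the $O(\wcost{} n)$-work, polylogarithmic-depth radix sort regime of~\cite{RR89}. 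The observation I would use is that for a fixed left endpoint there are only $O(\log n)$ possible nodes (one per level along its root-to-leaf path), so replacing ``node id'' by ``level'' yields a first coordinate in $[1,O(\log n)]$ and a second coordinate in $[1,2n]$, both inside the $O(n\log n)$ range where~\cite{RR89} gives $O(\wcost{} n)$ expected work and $O(\log^2 n)$ depth \whp{}. I would then argue that within one level the intervals of a single node are contiguous (because on a level the endpoint intervals of distinct nodes are disjoint along the coordinate axis), so a parallel scan identifies each node's block in $O(n)$ reads/writes and $O(\log n)$ depth; finally, I would build each inner red--black tree from a sorted array using the write-efficient bulk construction of~\cite{blelloch2016just}, again within budget.

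For the priority search tree, I would formalize the ``no-copy, mark-invalid'' recursion. The key data structure is a tournament tree over the $n$ pre-sorted points supporting three operations: (i) query max-priority over a range (to pick the root), (ii) $k$-th smallest coordinate in a range (to find the median), and (iii) delete a point. Each operation traverses $O(\log n)$ nodes with $O(1)$ writes per node in the worst case, so each costs $O(\wcost{}\log n)$ work. The main obstacle is showing that the \emph{total} number of such operations across the recursion is only $O(n)$, giving $O(\wcost{} n)$ overall. I would charge one find-max, one find-median, and one deletion to each of the $n$ priority-tree nodes produced, plus argue that the base case---triggered when holes outnumber valid points in a subrange---is reached with at most $O(\log n)$ holes per range (since each ancestor contributes one hole), fits in the $\Omega(\log n)$ small memory, and contributes only $O(n)$ writes in aggregate across all base-case subranges because their valid-point counts partition $[n]$.

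For the depth bound on the priority tree, I would use the following structural argument. The tree has $O(\log n)$ levels \whp{} (since medians are exact, the outer structure is perfectly balanced), and all nodes on the same level can be constructed in parallel once their index ranges are known. Each parallel step on a level performs $O(\log n)$-depth queries on the shared tournament tree; with priority-writes (or a standard reduce) the concurrent deletions on a level compose in $O(\log n)$ depth, giving $O(\log^2 n)$ overall. The base cases run fully inside \smallmem{} in $O(\log n)$ depth each and can be done in parallel, so they do not dominate. Combining with the interval-tree analysis yields the stated $O(\wcost{} n)$ expected work and $O(\log^2 n)$ depth \whp{}, establishing Theorem~\ref{thm:treebuild}.
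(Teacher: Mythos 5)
Your interval-tree argument follows the paper's construction essentially step for step, including the key trick of radix sorting on (tree level, endpoint rank) rather than (node id, endpoint rank) to keep the key range within $O(n\log n)$ for the sort of~\cite{RR89}, and the contiguity-per-level observation; that half is fine. The gap is in the priority-tree work analysis. You state that each tournament-tree operation (find-max, find-median, delete) costs $O(\wcost{}\log n)$ work and then claim the bound follows because the \emph{number} of operations is $O(n)$ (three per constructed node). But $n$ operations at $O(\wcost{}\log n)$ each gives $O(\wcost{}n\log n)$, not $O(\wcost{}n)$; charging one operation of each kind to each priority-tree node only bounds the count, it does not remove the per-operation logarithmic factor. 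The paper closes this by a range-dependent accounting: a query over a subrange of size $s$ touches only $O(\log s)$ tournament-tree nodes, and since level $i$ of the recursion issues $O(2^i)$ queries over ranges summing to $O(n)$, the per-level cost is $O(2^i\log(n/2^i))$, which sums over the $\log_2 n$ levels to $O(n)$ reads in total.

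The second missing idea concerns writes specifically. A deletion in the tournament tree, done in the obvious way, updates the min/count fields on the entire root path, i.e.\ $\Theta(\log n)$ writes per deletion and $\Theta(n\log n)$ writes overall, which already breaks the $O(\wcost{}n)$ bound. The paper's fix is to observe that once a point is deleted inside the subrange $(x,y)$ owned by the current recursive call, every subsequent query range is either contained in $(x,y)$ or disjoint from it, so it suffices to update only those ancestors of the deleted leaf whose range lies within $(x,y)$ --- at most $O(\log(y-x+1))$ of them. With this restriction the total number of writes obeys the same geometric level-by-level sum as the reads and is $O(n)$. Your base-case handling (at most $O(\log n)$ holes per range, loaded into the $\Omega(\log n)$-word \smallmem{}) and your depth argument ($O(\log n)$ exact-median levels, $O(\log n)$ depth per level for the tournament-tree updates, $O(\log^2 n)$ total) do match the paper; it is only the amortization of the tournament-tree operation costs, and in particular the truncated-update rule for deletions, that your proposal is missing.
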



\subsection{Dynamic Updates using Reconstruction-Based Rebalancing}\label{sec:aug-dyn}

Dynamic updates (insertions and deletions) are often supported on
augmented
trees~\cite{DCKO08,CLRS,edelsbrunner1980dynamic,mccreight1980efficient,McCreight85}
and the goal of this section is to support updates write-efficiently,
at the cost of performing extra reads to reduce the overall work.
Traditionally, an insertion or deletion costs $O(\log n)$ for interval
trees and priority search trees, and $O(\log^2 n)$ for range trees. In
the asymmetric setting, the work is multiplied by $\wcost{}$.
To reduce the overall work, we introduce an approach to select a
subset of tree nodes as \emph{critical} nodes, and only update the
balance information of those nodes (the
augmentations are mostly unaffected).
The selection of these critical nodes are done by the \labeling{} introduced in Section~\ref{sec:labeling}. Roughly speaking, for each tree path from the root to a leaf node, we have $O(\log_\alpha n)$ critical nodes marked such that the subtree weights of two consecutive marked nodes differ by about a factor of $\alpha \ge 2$.
By doing so, we only need to update the balancing information in the critical nodes, leading to fewer tree nodes modified in an update.


Arge et al.~\cite{arge1999two,arge2003optimal} use a similar strategy to support dynamic updates on augmented trees in the external-memory (EM) model, in which a block of data can be transferred in unit cost~\cite{AggarwalV88}.
They use a $B$-tree instead of a binary tree, which leads to a shallower tree structure and fewer memory accesses in the EM model.
However, in the \anp{} model, modifying a block of data requires work proportional to the block size, and directly using their approach cannot reduce the overall work.
Inspired by their approach, we propose a simple approach to reduce the work of updates for the \anp{} model.



\hide{
\yihan{this paragraph is mostly newly-written, trying explain the reconstruction and alpha-labeling at a high level. } The main component of our approach is \emph{reconstruction-based rebalancing} using the \emph{\labeling{}} technique.
The reconstruction-based rebalancing is used for avoiding writes in rotations. It defers rebalancing to where newly-added or -deleted nodes take a constant fraction of the current tree size, such that the number of writes amortized to each update is a constant.
In Section~\ref{sec:presort}, we have shown that given the sorted order of all (end)points, an interval or priority tree can be constructed in a linear number of reads and writes.
We can always obtain the sorted order via the tree structure, so when imbalance occurs, we can afford to reconstruct the whole subtree in reads and writes proportional to the subtree size and polylogarithmic depth. The \labeling{} aims at selecting a subset of tree nodes on an augmented tree, and only maintain balancing information as well as augmented values of these nodes, such that the overall number of writes of each update is reduced.
Roughly speaking, it selects every $\alpha \ge 2$ nodes on each tree path, such that the update cost is saved by a factor of $\alpha$. Altogether, combining reconstruction-based rebalancing with \labeling{} gives a unified approach for different augmented trees: interval trees, priority search trees, and range trees.
}

The main component of our approach is \emph{reconstruction-based rebalancing} using the \emph{\labeling{}} technique.
We can always obtain the sorted order via the tree structure, so when imbalance occurs, we can afford to reconstruct the whole subtree in reads and writes proportional to the subtree size and polylogarithmic depth.
This gives a unified approach for different augmented trees: interval trees, priority search trees, and range trees.

We introduce
the \labeling{} idea in Section~\ref{sec:labeling}, the rebalancing algorithm in Section~\ref{sec:rebalancing}, and its work analysis in Section~\ref{sec:labelingcost}. We then discuss the maintenance of augmented values for different applications in Section~\ref{sec:aug-maintain}.
We mention how to parallelize bulk updates in Section~\ref{sec:bulk}.

\subsubsection{$\alpha$-Labeling}
\label{sec:labeling}
The goal of the \labeling{} is to maintain the balancing information at only a subset of tree nodes, the critical nodes, such that the number of writes per update is reduced.
Once the augmented tree is constructed, we label the node as a \critical{} node if for some integer $i\ge0$, (1) its subtree weight is between $2\alpha^i$ and $4\alpha^i-2$ (inclusive); or (2) its subtree weight is $2\alpha^i-1$ and its sibling's subtree weight is $2\alpha^i$.
All other nodes are \emph{\secondary{}} nodes.
As a special case, we always treat the root as a virtual \critical{} node, but it does not necessary
satisfy the invariants of \critical{} nodes.
Note that all leaf nodes are \critical{} nodes in \labeling{} since they always have subtrees of weight 2. 
When we label a \critical{} node, we refer to its current subtree weight (which may change after insertions/deletions) as its \emph{initial weight}.
Note that after the augmented tree is constructed, we can find and mark the critical nodes in $O(n)$ reads/writes and $O(\log n)$ depth.
After that, we only maintain the subtree weights for these \critical{} nodes, and use their weights to balance the tree.

\begin{fact}
\label{fact:initw}
For a critical node $A$, $2\alpha^i-1\le |A|\le 4\alpha^i-2$ holds for some integer $i$.
\end{fact}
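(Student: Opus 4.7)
The plan is to verify the claim by simply unpacking the definition of a \critical{} node into its two cases and checking that each case yields a weight inside the stated interval $[2\alpha^i - 1, \, 4\alpha^i - 2]$ for the corresponding integer $i \geq 0$. Since a \critical{} node is declared by exactly one of two clauses, one can pick up the $i$ supplied by whichever clause applies to $A$.

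In case (1), the definition directly gives $2\alpha^i \le |A| \le 4\alpha^i - 2$; this range is trivially contained in $[2\alpha^i - 1, \, 4\alpha^i - 2]$ with the same $i$, so the inequality is immediate. In case (2), $|A| = 2\alpha^i - 1$ sits exactly at the lower endpoint, so the lower bound is attained. For the upper bound one only needs $2\alpha^i - 1 \le 4\alpha^i - 2$, i.e.\ $2\alpha^i \ge 1$, which holds because $\alpha \ge 2$ and $i \ge 0$ (in fact $2\alpha^i \ge 2$).

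The one caveat worth flagging explicitly is the special case of the root, which the text marks as a \emph{virtual} \critical{} node that need not obey the invariants; the fact should be read as applying to the genuine \critical{} nodes defined by clauses (1) and (2). There is no hard step here: the statement is essentially a restatement of the definition bundled into a single interval, and the argument reduces to an arithmetic check whose only non-triviality is confirming $\alpha \ge 2$ makes case (2)'s value lie within case-(1)-style bounds.
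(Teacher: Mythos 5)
Your proposal is correct and matches the paper's argument, which simply observes that the fact ``directly follows the definition of the critical node'': case (1) gives $2\alpha^i \le |A| \le 4\alpha^i-2$ and case (2) gives $|A| = 2\alpha^i-1$, both within the stated interval. Your extra remarks (the arithmetic check that $2\alpha^i-1 \le 4\alpha^i-2$ and the caveat about the root being only a virtual critical node) are consistent with the paper and do not change the approach.
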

This fact directly follows the definition of the critical node.

For two \critical{} nodes $A$ and $B$, if $A$ is $B$'s ancestor and there is no other \critical{} node on the tree path between them, we refer to $B$ as $A$'s \emph{\critical{} child}, and $A$ as $B$'s \emph{\critical{} parent}.
We define a \emph{\critical{} sibling} accordingly.

We show the following lemma on the initial weights.
\begin{lemma}\label{lemma:critical-static}
For any two \critical{} nodes $A$ and $B$ where $A$ is $B$'s \critical{} parent, their initial weights satisfy $\max\{(\alpha/2)|B|,2|B|-1\}\le |A|\le  (2\alpha+1) |B|$.
\end{lemma}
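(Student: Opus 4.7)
My plan exploits two facts about the initial tree: (i) by the post-sorted construction (Section~\ref{sec:presort}) it is perfectly balanced, so every internal node $v$ has its children's subtree weights equal to $\lfloor |v|/2 \rfloor$ and $\lceil |v|/2 \rceil$; and (ii) by Fact~\ref{fact:initw}, every critical node at level $i$ has weight in $[2\alpha^i-1, 4\alpha^i-2]$.

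The bound $|A| \ge 2|B|-1$ follows from (i) alone: by induction on the depth difference $d \ge 1$, any depth-$d$ ancestor $u$ of $v$ satisfies $|u| \ge 2^d|v| - (2^d-1) \ge 2|v|-1$. For the other bounds, let $i_A, i_B$ denote the levels of $A, B$. The key claim is $i_A = i_B + 1$. To rule out $i_A = i_B$, observe that at a common level $i$ the perfect-balance inequality forces $|B| = 2\alpha^i - 1$; since $|B|$ must then be critical via case~(2), the sibling of $B$ must have weight exactly $2\alpha^i$, but checking the three possible parent values $|A| \in \{2|B|-1, 2|B|, 2|B|+1\}$ shows the resulting $|A| = 4\alpha^i - 1$ itself lies outside the level-$i$ critical range (and larger depth differences are incompatible with the upper bound $|A| \le 4\alpha^i-2$). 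To rule out $i_A \ge i_B + 2$, I show the labeling cannot skip a critical level: any node whose weight lies in the gap above level $i$ (roughly $|v| \in [4\alpha^i-1, 2\alpha^{i+1}-1]$) has both children landing in the level-$i$ critical range $[2\alpha^i, 4\alpha^i-2]$ after halving, with the boundary case $|v|=4\alpha^i-1$ producing a child of weight $2\alpha^i-1$ whose sibling has weight exactly $2\alpha^i$ and is thus promoted to critical by case~(2).

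Given $i_A = i_B + 1$, both remaining inequalities become pure arithmetic via Fact~\ref{fact:initw}. The upper bound $|A| \le (2\alpha+1)|B|$ reduces to $(2\alpha+1)(2\alpha^{i_B}-1) \ge 4\alpha^{i_B+1}-2$, i.e., $2\alpha^{i_B} - 2\alpha + 1 \ge 0$, which holds for $i_B \ge 1$; the degenerate $i_B = 0$ case uses $|B| \ge 2$ to get $|A|/|B| \le (4\alpha-2)/2 = 2\alpha-1 \le 2\alpha+1$. The $(\alpha/2)|B|$ lower bound reduces to $2(2\alpha^{i_B+1}-1) \ge \alpha(4\alpha^{i_B}-2)$, i.e., $2\alpha - 2 \ge 0$, which is immediate.

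The main obstacle is the no-skipping part. The subtle point is that the case-(2) provision of the \labeling{} definition---promoting weight $2\alpha^i-1$ to critical whenever its sibling has weight $2\alpha^i$---is precisely tuned so that weights produced by halving from a gap-weight parent always land on critical nodes rather than sliding into another gap, and the enumeration of how halving interacts with the boundaries of the critical ranges is where most of the bookkeeping lies.
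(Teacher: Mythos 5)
Your overall route is the same as the paper's: establish that the Fact~\ref{fact:initw} levels satisfy $i_A=i_B+1$ and then finish by arithmetic, with the $2|B|-1$ bound coming from perfect balance. Your elimination of $i_A=i_B$ and the final arithmetic (including the $i_B=0$ leaf case) are fine. The problem is the no-skipping step, which is the heart of the lemma. Your key claim---that every node whose weight lies in the gap $[4\alpha^i-1,\,2\alpha^{i+1}-2]$ has both children landing in the level-$i$ critical range after halving---is false once $\alpha\ge 4$, while the lemma is claimed for all $\alpha\ge 2$. Concretely, take $\alpha=4$, $i=1$, and a node of weight $2\alpha^{i+1}-2=30$: its children have weight $15$, which lies in the gap $[15,30]$ again, not in $[8,14]$, and is not the conditional case-(2) weight $7$. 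In general a gap node of weight near $2\alpha^{i+1}-2$ has children of weight about $\alpha^{i+1}-1>4\alpha^i-2$ whenever $\alpha\ge 4$, so one halving step can land back in the gap, and the deduction that a critical node at the intermediate level must appear on the path breaks down exactly in this regime.

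What is true, and what the paper proves, is a statement about where the path from $B$ up to $A$ crosses the level-$(i-1)$ range (with $i$ the level of $A$ and the contradiction hypothesis $i_B<i-1$), not about the children of an arbitrary gap node: let $z$ be the ancestor of $B$ with weight closest to but not exceeding $2\alpha^{i-1}$. Perfect balance gives $|p(z)|\le 2|z|+1$, so either $|z|\le 2\alpha^{i-1}-2$ and then $2\alpha^{i-1}\le |p(z)|\le 4\alpha^{i-1}-3$, making $p(z)$ critical by case (1), or $|z|=2\alpha^{i-1}-1$, in which case either its sibling has weight exactly $2\alpha^{i-1}$ and $z$ is critical by case (2), or $|p(z)|\le 4\alpha^{i-1}-2$ and $p(z)$ is critical by case (1). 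Either way a critical node strictly between $B$ and $A$ exists, contradicting that $A$ is $B$'s critical parent. You already identified the case-(2) boundary mechanism; the missing ingredient is to apply the factor-of-two bound to this crossing parent (equivalently, to the first node on the downward path whose weight drops to at most $4\alpha^{i-1}-2$) rather than to a single halving from a gap node, since for large $\alpha$ several halvings may occur inside the gap before the critical range is reached.
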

\begin{proof}
Based on Fact \ref{fact:initw}, we assume $2\alpha^i-1\le |A|\le 4\alpha^i-2$ and $2\alpha^j-1\le |B|\le 4\alpha^j-2$ for some integers $i$ and $j$.
We first show that $i=j+1$.
It is easy to check that $j$ cannot be larger than or equal to $i$.
Assume by contradiction that $j<i-1$.
With this assumption, we will show that there exists an ancestor of $B$, which we refer to it as $y$, which is a critical node.
The existence of $y$ contradicts the fact that $A$ is $B$'s \critical{} parent.
We will use the property that for any tree node $x$ the weight of its parent $p(x)$ is $2|x|-1\le |p(x)|\le 2|x|+1$.

Assume that $B$ does not have such an ancestor $y$. Let $z$ be the ancestor of $B$ with weight closest to but no more than $2\alpha^{i-1}$. We consider two cases: (a) $|z|\leq 2\alpha^{i-1}-2$ and (b) $|z|=2\alpha^{i-1}-1$. In case (a)  $z$'s parent $p(z)$ has weight at most $2|z|+1 = 4\alpha^{i-1}-3$. $|p(z)|$ cannot be less than  $2\alpha^{i-1}$ by definition of $z$, and so $y=p(z)$, leading to a contradiction. In case (b), $z$'s sibling does not have weight $2\alpha^{i-1}$, otherwise $y=z$. However, then $|p(z)| \leq 2|z| = 4\alpha^{i-1}-2$, and either $z$ is not the ancestor with weight closest to $2\alpha^{i-1}$ or $y=p(z)$.


Given $i=j+1$, we have $(\alpha/2)|B|\le |A|\le (2\alpha+1) |B|$ (by plugging in $2\alpha^i-1\le |A|\le 4\alpha^i-2$ and $2\alpha^{i-1}-1\le |B|\le 4\alpha^{i-1}-2$).
Furthermore, since $A$ is $B$'s ancestor, we have $2|B|-1\le |A|$.
Combining the results proves the lemma.
\end{proof}

\subsubsection{Rebalancing Algorithm based on $\alpha$-Labeling}\label{sec:aug-rebalance}
\label{sec:rebalancing}
We now consider insertions and deletions on an augmented tree.
Maintaining the augmented values on the tree are independent of our \labeling{} technique, and differs slightly for each of the three tree structures. We will further discuss how to maintain augmented values in Section~\ref{sec:aug-maintain}.


We note that deletions can be handled by marking the deleted objects without actually applying the deletion, and reconstructing the whole subtree once a constant fraction of the objects is deleted. Therefore in this section, we first focus on the insertions only.
We analyze single insertions here, and discuss bulk insertions later in Section~\ref{sec:bulk}.
Once the subtree weight of a \critical{} node $A$ reaches twice
the initial weight $s$, we reconstruct the whole subtree, label the \critical{} nodes within the subtree, and recalculate the initial weights of the new critical nodes.
An exception here is that, if $s\le 4\alpha^i-2$ and $2\alpha^{i+1}-1\le 2s$ for a certain $i$, we do not mark the new root since otherwise it violates the bound stated in Lemma~\ref{lemma:critical-dynamic} (see more details in Section \ref{sec:labelingcost}) with $A$'s \critical{} parent. 
After this reconstruction, $A$'s original \critical{} parent gets one extra \critical{} child, and the two affected children now have initial weights the same as $A$'s initial weight.
If imbalance occurs at multiple levels, we reconstruct the topmost tree node.
An illustration of this process is shown in Figure~\ref{fig:aug}.

\begin{figure}
\begin{center}
  \includegraphics[width=.8\columnwidth]{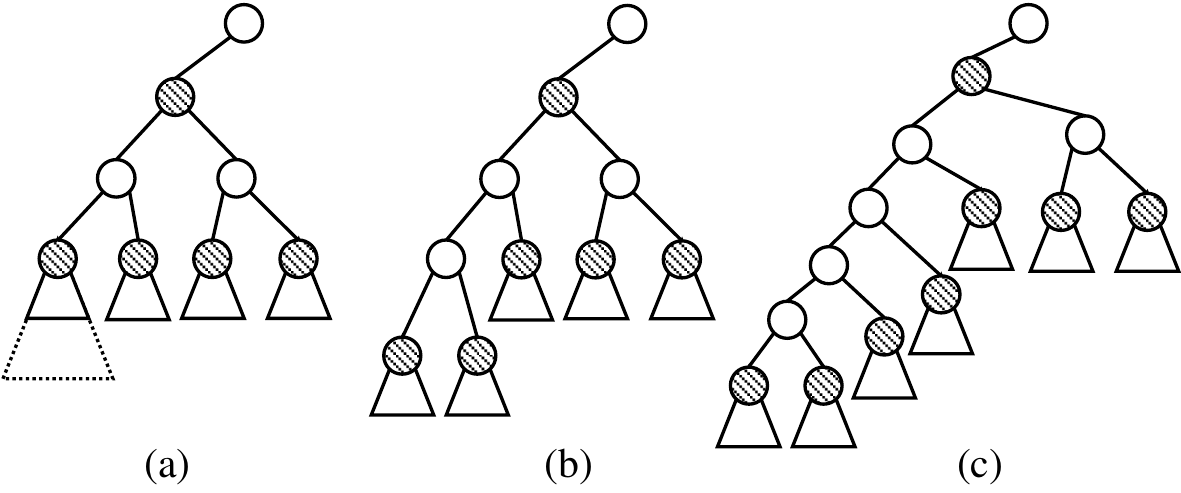}
\end{center}\vspace{-1.2em}
\caption[An illustration of rebalancing based on \labeling{}.]
{An illustration of rebalancing based on \labeling{}. The \critical{} nodes are shaded.
The case after construction is shown in~(a) with solid borders.
After some insertions, the size of one of the subtrees grows to twice its initial weight (dashed lines in (a)), so the algorithm reconstructs the subtree, as shown in~(b).
As we keep inserting new nodes along the left spine, the tree will look like what is shown in~(c), but Lemma~\ref{lemma:critical-dynamic} guarantees that the subtree of the topmost \critical{} node will be reconstructed before it gets more than $4\alpha+2$ \critical{} children.
The lemma also guarantees that on the path from a critical node to any of its critical children, there can be at most $4\alpha-1$ \secondary{} nodes.
}
\label{fig:aug}
\end{figure}


We can directly apply the algorithms in Section~\ref{sec:presort} to reconstruct a subtree as long as we have the sorted order of the (end)points in this subtree.
For interval and range trees, we can acquire the sorted order by traversing the subtree. 
using linear work and $O(\log n)$ depth~\cite{SGBFG2015,BBFGGMS16}. 
For priority trees, since the tree nodes are not stored in-order, we need to insert all interior nodes into the tree in a bottom-up order based on their coordinates (without applying rebalancing) to get the total order on coordinates of all points (the details and cost analysis can be found in Appendix~\ref{sec:aug-app}).
After we have the sorted order, a subtree of weight $n$ can be constructed in $O(\wcost n)$ work and $O(\log^2 n)$ depth.

As mentioned, we always treat the root as a virtual \critical{} node, but it does not necessary
satisfy the invariants of \critical{} nodes.
By doing so, once the weight of the whole tree doubles, we reconstruct the entire tree.
We need $\Omega(n)$ insertions for one reconstruction on the root (there can be deletions).
The cost for reconstruction is $O(\wcost{}n)$ for interval trees and priority trees, and $O(\wcost n\log_\alpha n)$ for range trees (shown in Section~\ref{sec:aug-maintain}).
The amortized cost is of a lower order compared to the update cost shown in Theorem~\ref{thm:aug-dynamic}. 

\subsubsection{Cost Analysis of the Rebalancing}
\label{sec:labelingcost}
\hide{
\begin{invariant}
For a \critical{} tree node $A$ with subtree weight $S_A$, the subtree weight $S_B$ of $A$'s \critical{} sibling $B$ satisfies $S_A<4S_B$.
\end{invariant}
}

To show the rebalancing cost, we first prove some properties about our
dynamic augmented trees.

\begin{lemma}\label{lemma:critical-dynamic}
In a dynamic augmented tree with \labeling{},
we have $\max\{(\alpha/4)|B|,(3/2)|B|-1\}\le |A|\le  (4\alpha+2) |B|$
for any two \critical{} nodes $A$ and $B$ where $A$ is $B$'s \critical{} parent.

\end{lemma}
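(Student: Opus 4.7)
The plan is to establish each of the three inequalities by combining Lemma~\ref{lemma:critical-static} (which relates the \emph{initial} weights of $A$ and $B$, i.e., their weights right after the most recent reconstruction that relabeled them) with the dynamic invariant $s_X \leq |X| < 2 s_X$ for every critical node $X$, where $s_X$ denotes $X$'s initial weight. This invariant follows from the rebalancing rule: we rebuild a critical subtree exactly when its weight reaches twice its initial value.

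For the upper bound $|A| \leq (4\alpha+2)|B|$, I would apply the static upper bound $s_A \leq (2\alpha+1) s_B$ together with $|A| < 2 s_A$ and $|B| \geq s_B$ to obtain $|A| < 2(2\alpha+1) s_B \leq (4\alpha+2)|B|$. For the weaker lower bound $|A| \geq (\alpha/4)|B|$, I would chain $|A| \geq s_A$, $s_A \geq (\alpha/2) s_B$ (from Lemma~\ref{lemma:critical-static}), and $s_B > |B|/2$ (from $|B| < 2 s_B$) to get $|A| > (\alpha/4)|B|$. Both of these are short derivations that inherit directly from the static analysis.

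The main obstacle is the sharper lower bound $|A| \geq (3/2)|B| - 1$, since combining $s_A \geq 2 s_B - 1$ with $s_B \geq (|B|+1)/2$ only yields $|A| \geq |B|$. To gain the extra $|B|/2$, I would use the tree structure rather than weight arithmetic alone. Write $|A| = |A_1| + |A_2|$, where $A_1$ is the child of $A$ whose subtree contains $B$ and $A_2$ is its sibling; then $|A_1| \geq |B|$ is immediate. For $|A_2|$, observe that when $A$ was last reconstructed, the algorithm built $A$ as a balanced binary tree, so the smaller subtree weight is at least $\lfloor s_A / 2 \rfloor$. Insertions only increase $|A_2|$, and any subsequent reconstruction localized strictly inside $A_2$ is purely a rebalancing that preserves the subtree's total weight. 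Hence $|A_2| \geq \lfloor s_A/2 \rfloor \geq \lfloor (2 s_B - 1)/2 \rfloor = s_B - 1 \geq (|B|-1)/2$, and adding this to $|A_1| \geq |B|$ gives $|A| \geq (3|B|-1)/2 \geq (3/2)|B| - 1$.

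The subtlety I expect to need the most care is the effect of deferred deletions on $|A_2|$: a deletion-driven reconstruction inside $A_2$ can actually shrink $|A_2|$ by discarding marked objects. Since such reconstructions only fire once a constant fraction of the subtree's objects have been marked deleted, the shrinkage is bounded, and I would verify that the resulting loss is absorbed into the $-1$ slack of the stated bound (or, equivalently, that the deletion threshold is chosen so that $|A_2|$ never falls below $s_B - 1$ before $A$ itself is rebuilt). Once this accounting is in place, the three inequalities combine to give the lemma.
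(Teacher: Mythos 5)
Your derivations of $|A|\le (4\alpha+2)|B|$ and $|A|\ge(\alpha/4)|B|$ are exactly the paper's: combine Lemma~\ref{lemma:critical-static} on initial weights with the doubling rule $s_X\le |X|\le 2s_X$. Where you genuinely diverge is the bound $(3/2)|B|-1\le|A|$. The paper's argument is a monotone-growth observation: since $B$'s subtree is contained in $A$'s, every insertion that raises $|B|$ raises $|A|$ by at least as much, so $|A|\ge s_A+(|B|-s_B)\ge (2s_B-1)+(|B|-s_B)=s_B+|B|-1\ge (3/2)|B|-1$, using only $s_A\ge 2s_B-1$ and $|B|\le 2s_B$. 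Your route instead decomposes $|A|=|A_1|+|A_2|$ and lower-bounds the off-path child by $\lfloor s_A/2\rfloor$. This is valid under the paper's weight convention (a node's weight equals the sum of its children's weights), but it imports two facts the paper's argument never needs: that $A$'s subtree is perfectly balanced at the moment its initial weight is assigned (true, because every construction and reconstruction uses the perfectly balanced post-sorted build of Section~\ref{sec:presort}), and that $|A_2|$ never shrinks afterwards (reconstructions rooted inside $A_2$ preserve its weight, and deletions are only marked rather than removed, so weights do not decrease between rebuilds---this also settles your deletion caveat at the same level of rigor as the paper, which defers deletions the same way). Both proofs equally rely, implicitly, on the static relation $s_A\ge 2s_B-1$ persisting when $B$ is relabeled by a reconstruction strictly inside $A$, since such a rebuild hands the new critical children the same initial weight scale. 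In short, your proof is correct; the paper's growth argument is shorter and independent of how balanced $A$'s two children are, while yours makes the structural balance of the reconstruction explicit at the cost of the extra bookkeeping about $A_2$.
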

\begin{proof}
For any \critical{} node $A$ in the tree, the subtree weight of its \critical{} child $B$ can grow up to a factor of 2 of $B$'s initial weight, after which the subtree is reconstructed to two new critical nodes with the same initial weight of $B$. $A$'s weight can grow up to a factor of 2 of $A$'s initial weight, without affecting $B$'s weight (i.e., all insertions occur in $A$'s other \critical{} children besides $B$).
Combining these observations with the result in Lemma~\ref{lemma:critical-static} shows this lemma except for the $(3/2)|B|-1\le |A|$ part.
Originally we have $2|B|-1\le |A|$ after the previous reconstruction.
$|A|$ grows together when $|B|$ grows, and right before the reconstruction of $B$ we have $(3/2)|B|-1\le |A|$.
\end{proof}

Lemma~\ref{lemma:critical-dynamic} shows that each critical node has at most $4\alpha+2$ \critical{} children, and so that there are at most $4n+1$ \secondary{} nodes to connect them.
This leads to the following corollary.
\begin{corollary}\label{cor:children}
The length of the path from a \critical{} node to its \critical{} parent is at most $4\alpha+1$.
\end{corollary}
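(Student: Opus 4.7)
The plan is to derive the path-length bound as a structural consequence of Lemma~\ref{lemma:critical-dynamic}. First I will bound the number of critical children that a critical node $A$ can have, and then translate that count into a depth bound by exploiting the binary-tree structure that connects $A$ to its critical children.

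For the first step, Lemma~\ref{lemma:critical-dynamic} already gives $|A|\le (4\alpha+2)|B|$ whenever $B$ is a critical child of $A$. Since the subtrees rooted at distinct critical children of $A$ are pairwise disjoint and all fit inside $A$'s subtree, their weights sum to at most $|A|$, so $A$ can have at most $4\alpha+2$ critical children. This is the ``at most $4\alpha+2$ critical children'' claim already stated in the paragraph immediately preceding the corollary; it is the foundation for everything that follows.

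For the second step, I will consider the minimal subtree $T_A$ of the augmented tree that contains $A$, every critical child of $A$, and every real-tree node that lies on a path from $A$ to such a child. By the definition of critical child, every internal node of $T_A$ other than $A$ is secondary. I would then argue that $T_A$ is a full binary tree: if $v$ is internal in $T_A$ and one of its two real-tree children $c$ were missing from $T_A$, then the subtree rooted at $c$ would contain no critical child of $A$; but every leaf of the original tree is critical (from the remark in Section~\ref{sec:labeling}), so the first critical node reached by descending from $c$ exists, and it must be a critical child of $A$ because no critical node sits strictly above it on the path from $A$. Hence both children of $v$ lie in $T_A$. A full binary tree with $k$ leaves has root-to-leaf depth at most $k-1$ (attained by a caterpillar), so with $k\le 4\alpha+2$ the depth of $T_A$ is at most $4\alpha+1$, which is exactly the claimed bound on the distance from $A$ to any critical child.

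The main obstacle I expect is the fullness argument for $T_A$---justifying that every side subtree hanging off the path from $A$ to $B$ really does reach another critical child of $A$. Once that observation is in hand, the rest is elementary binary-tree counting. A clean fallback, if the structural step feels slippery, is a purely weight-based argument: writing $|A|=|B|+\sum_i |s_i|$ where the $s_i$ are the sibling subtrees along the path from $A$ to $B$, each $s_i$ contains a critical child of $A$ and hence has weight at least $|A|/(4\alpha+2)$ by Lemma~\ref{lemma:critical-dynamic}, so the number of siblings---which equals the path length---is at most $(4\alpha+2)(1-|B|/|A|)\le 4\alpha+1$.
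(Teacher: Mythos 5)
Your proposal is correct and follows essentially the same route as the paper: the paper's (terse) justification is exactly that Lemma~\ref{lemma:critical-dynamic} bounds the number of \critical{} children of a node by $4\alpha+2$, so at most $4\alpha+1$ \secondary{} nodes are needed to connect them, and your write-up simply makes explicit the steps the paper leaves implicit (that every \secondary{} node on the path branches off to yet another \critical{} child, so the connecting structure is a full binary tree with at most $4\alpha+2$ leaves and hence root-to-leaf depth at most $4\alpha+1$). Your weight-telescoping fallback is also a valid, slightly more direct derivation of the same bound.
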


 Combining Lemma~\ref{lemma:critical-dynamic} and Corollary~\ref{cor:children} gives the following result.

\begin{corollary}\label{cor:treedepth}
For a leaf node in a tree with \labeling{}, the tree path to the root contains $O(\log_\alpha n)$ \critical{} nodes and $O(\alpha\log_\alpha n)$ nodes.
\end{corollary}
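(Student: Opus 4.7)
The plan is to chain Lemma~\ref{lemma:critical-dynamic} along a root-to-leaf path in order to bound the number of critical nodes, and then invoke Corollary~\ref{cor:children} to convert this into a bound on all nodes.

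First I would walk up the critical path starting from the leaf. By Lemma~\ref{lemma:critical-dynamic}, if $A$ is the critical parent of $B$, then $|A| \ge \max\{(\alpha/4)|B|,\ (3/2)|B|-1\}$. So each step up the critical path multiplies the subtree weight by at least a factor of $\max(\alpha/4,\ 3/2-o(1))$. Starting from a leaf of weight $2$, after $k$ critical levels the weight is at least $2 \cdot \max(\alpha/4,\ 3/2)^{k}$ (absorbing the additive $-1$ into constants, which is safe since $|B|\ge 2$ for any non-leaf critical node). Since the root has weight $n$, we get $\max(\alpha/4,\ 3/2)^k = O(n)$.

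I would then split into two regimes to conclude $k = O(\log_\alpha n)$. When $\alpha \ge 6$, we have $\alpha/4 \ge 3/2$, and solving gives $k = O(\log_{\alpha/4} n) = O(\log_\alpha n)$, since $\log_{\alpha/4} n = \log n / (\log \alpha - 2) = \Theta(\log n / \log \alpha) = \Theta(\log_\alpha n)$ once $\alpha \ge 6$. When $2 \le \alpha < 6$, the $3/2$ factor dominates and we get $k = O(\log_{3/2} n) = O(\log n)$, which is $O(\log_\alpha n)$ since $\log \alpha = \Theta(1)$. In both cases the number of critical nodes on any root-to-leaf path is $O(\log_\alpha n)$.

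Finally, Corollary~\ref{cor:children} says the path between any critical node and its critical parent has length at most $4\alpha+1$, i.e.\ at most $O(\alpha)$ secondary nodes sit between two consecutive critical ancestors. Multiplying, the total number of nodes (critical plus secondary) on the root-to-leaf path is $O(\alpha) \cdot O(\log_\alpha n) = O(\alpha \log_\alpha n)$, giving the corollary. The only delicate point, which I would double-check, is the regime switch at small $\alpha$: this is why the lemma records both multiplicative and additive-style lower bounds, and it is what lets the claim be stated uniformly as $O(\log_\alpha n)$ rather than degrading when $\alpha/4 < 1$.
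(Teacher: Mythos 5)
Your proposal is correct and takes the same route the paper intends: the paper itself states this corollary without a written proof, only the one-line remark that it follows by combining Lemma~\ref{lemma:critical-dynamic} with Corollary~\ref{cor:children}, and your argument is precisely the natural unpacking of that remark. One small point worth tightening: at the leaf itself $|B|=2$, so $(3/2)|B|-1 = 2$ gives no multiplicative growth for the very first critical step; a cleaner phrasing is that once $|B|\ge 3$ (which holds for every critical node strictly above a leaf, since a node's parent has weight $\ge 2|B|-1$), the lemma gives $(3/2)|B|-1 \ge (7/6)|B|$, so the growth factor is bounded away from $1$ after $O(1)$ initial steps; this only shifts $k$ by a constant and does not affect the $O(\log_\alpha n)$ bound.
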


Corollary~\ref{cor:treedepth} shows the number of reads during locating a node in an augmented tree, and the number of \critical{} nodes on that path.

With these results, we now analyze the cost of rebalancing for each insertion.
For a \critical{} node with initial weight $W$, we need to insert at least another $W$ new nodes into this subtree before the next reconstruction of this \critical{} node.
Theorem~\ref{thm:treebuild} shows that the amortized cost for each insertion in this subtree is therefore $O(\wcost{})$ on this node.
Based on Corollary~\ref{cor:treedepth}, the amortized cost for each insertion contains $O(\log_\alpha n)$ writes and $O(\alpha\log_\alpha n)$ reads.
In total, the work per insertion is $O((\alpha+\wcost)\log_\alpha n)$, since we need to traverse $O(\alpha\log_\alpha n)$ tree nodes, update $O(\log_\alpha n)$ subtree weights, and amortize $O(\wcost\log_\alpha n)$ work for reconstructions.

We note that any interleaving insertions can only reduce the amortized cost for deletions. Therefore, both the algorithm and the bound can be extended to any interleaving sequence of insertions and deletions.
Altogether, we have the following result, which may be of independent interest.

\begin{theorem}
Using {reconstruction-based rebalancing} based on the {\labeling{}} technique, the amortized cost of each update (insertion or deletion) to maintain the balancing information on a tree of size $n$ is $O((\wcost{}+\alpha)\log_{\alpha} n)$.
\end{theorem}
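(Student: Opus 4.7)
The plan is to decompose the amortized cost of a single update into three parts corresponding to (i) traversal of the tree path, (ii) per-update writes at critical nodes, and (iii) the amortized cost of reconstructions triggered by weight doubling, and then show each part fits under the claimed $O((\wcost+\alpha)\log_\alpha n)$ bound. I will focus on insertions first and then handle deletions via the standard lazy-deletion reduction.

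First I would bound the traversal cost. Locating the insertion point requires walking a root-to-leaf path, whose length in our \labeling{} tree is bounded by Corollary~\ref{cor:treedepth} by $O(\alpha \log_\alpha n)$ total nodes and $O(\log_\alpha n)$ critical nodes on that path. Each node visited costs one read, giving $O(\alpha \log_\alpha n)$ reads. Since balancing information is only stored at critical nodes, we perform at most one write per critical node on the path to update its subtree weight, contributing $O(\log_\alpha n)$ writes, i.e., $O(\wcost \log_\alpha n)$ work.

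Second, I would amortize reconstructions. A critical node $A$ with initial weight $W$ triggers a reconstruction only when its subtree weight reaches $2W$, which requires $\Omega(W)$ insertions into the subtree rooted at $A$ since the last reconstruction at $A$. By Theorem~\ref{thm:treebuild} the reconstruction itself costs $O(\wcost W)$ work (after acquiring the sorted order by an in-order traversal in linear work), so using the standard potential-style accounting we can charge $O(\wcost)$ amortized work to each insertion that passes through $A$. Since an insertion passes through exactly $O(\log_\alpha n)$ critical nodes on its path (Corollary~\ref{cor:treedepth}), the total amortized reconstruction charge is $O(\wcost \log_\alpha n)$. Summing the three contributions yields the claimed $O((\wcost+\alpha)\log_\alpha n)$ amortized work per insertion.

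Finally, I would extend the argument to deletions, which is handled by marking the deleted object in place (constant reads/writes) and triggering reconstruction of a subtree only after a constant fraction of its objects have been marked; this keeps the reconstruction frequency asymptotically identical to the insertion case, so the same amortized bound applies. For interleaved sequences of insertions and deletions, deletions can only decrease the subtree weights and therefore postpone insertion-triggered reconstructions, so the bound still holds.

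The main technical obstacle is the amortization step: we must verify that the reconstruction invariants from Section~\ref{sec:labelingcost} (in particular, Lemma~\ref{lemma:critical-dynamic} and the special case excluded when $s\le 4\alpha^i-2$ and $2\alpha^{i+1}-1\le 2s$) guarantee that reconstructing $A$ gives both of its new critical descendants fresh initial weights of order $W$, so that the $\Omega(W)$ gap between successive reconstructions at $A$ truly holds and the charging argument is not double-counted across levels. Once this invariant is used to certify the $\Omega(W)$ gap, the rest of the proof is a straightforward summation along the $O(\log_\alpha n)$ critical nodes of the update path.
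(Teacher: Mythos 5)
Your proposal is correct and follows essentially the same route as the paper: bound the traversal by Corollary~\ref{cor:treedepth} ($O(\alpha\log_\alpha n)$ reads, $O(\log_\alpha n)$ weight updates at critical nodes), then amortize each reconstruction of cost $O(\wcost W)$ (Theorem~\ref{thm:treebuild}) against the $\Omega(W)$ insertions required to double a critical node's initial weight, charging $O(\wcost)$ per insertion per critical node on its path, and handle deletions lazily by marking. Your closing remark about verifying the $\Omega(W)$ gap via Lemma~\ref{lemma:critical-dynamic} and the relabeling rule is exactly the invariant the paper relies on implicitly, so no substantive difference remains.
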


\subsubsection{Handling Augmented Values}\label{sec:aug-maintain}
Since the underlying tree structure is still binary, minor changes to the trees are required for different augmentations. 

\paragraph{Interval trees.} We do not need any changes for the interval tree.
Since we never apply rotations, we directly insert/delete the interval in the associated inner tree with a cost of $O(\log n+\wcost{})$.

\paragraph{Range trees.} For the range tree, we only keep the inner trees for the \critical{} nodes.
As such, the overall augmentation weight (i.e., overall weights of all inner trees) is $O(n\log_\alpha n)$. 
For each update, we insert/delete this element in $O(\log_\alpha n)$ inner trees (Corollary~\ref{cor:treedepth}), and the overall cost is $O((\log n+\wcost{})\log_\alpha n)$.
Then each query may look into no more than $O(\alpha\log_\alpha n)$ inner trees each requiring $O(\log n)$ work for a 1D range query.
The overall cost for a query is therefore $O(\wcost{}k+\alpha\log_\alpha n\log n)$.

\paragraph{Priority trees.}
For insertions on priority trees, we search its coordinate in the tree and put it where the current tree node is of lower priority than the new point.
The old subtree root is then recursively inserted to one of its subtrees.
The cost can be as expensive as $O(\wcost \alpha \log_\alpha n)$ when a point with higher priority than all tree nodes is inserted.
To address this, points are only stored in the \critical{} nodes, and the \secondary{} nodes only partition the range, without holding points as augmented values.
This can be done by slightly modifying the construction algorithm in Section~\ref{sec:presort}.
During the construction, once the current node is a \secondary{} node, we only partition the range, but do not find the node with the highest priority.
Since all leaf nodes are \critical{}, the tree size is affected by at most a factor of $2$.
With this approach, each insertion modifies at most $O(\log_\alpha n)$ nodes, and so the extra work per insertion for maintaining augmented data is $O((\alpha+\wcost)\log_\alpha n)$. 
A deletion on priority trees can be implemented symmetrically, and can lead to cascading promotions of the points.
Once the promotions occur, we leave a dummy node in the original place of the last promoted point, so that all of the subtree sizes remain unchanged (and the tree is reconstructed once half one the nodes are dummy).
The cost of a deletion is also $O((\alpha+\wcost)\log_\alpha n)$.

Combining the results above gives the following theorem.
\begin{theorem}\label{thm:aug-dynamic}
Given any integer $\alpha\ge 2$, an update on an interval or priority search tree requires $O((\wcost +\alpha)\log_\alpha n)$ amortized work and a query costs $O(\wcost k+\alpha\log_\alpha n)$; for a 2D range tree, the query and amortized update cost is $O(\left( \alpha \log n+\wcost \right)\log_{\alpha}n)$ and $O(\wcost k + \alpha \log_\alpha n\log n)$.
\end{theorem}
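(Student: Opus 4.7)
The plan is to prove the theorem by separately combining the cost of maintaining balance (already bounded in Section~\ref{sec:rebalancing}) with the cost of maintaining the augmentation (discussed in Section~\ref{sec:aug-maintain}) for each of the three structures, then adding the query cost established by Corollary~\ref{cor:treedepth}.

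First, for the interval tree: locating the insertion/deletion position requires reading $O(\alpha\log_\alpha n)$ nodes along one root-to-leaf path (Corollary~\ref{cor:treedepth}), and maintaining subtree weights at the $O(\log_\alpha n)$ \critical{} nodes contributes $O(\wcost\log_\alpha n)$ writes. Since the interval is stored in exactly one inner tree, the work for the inner red-black-tree update is only $O(\log n + \wcost)$, which is dominated. Amortized reconstructions contribute $O(\wcost)$ per update by the argument at the end of Section~\ref{sec:labelingcost}. Summing yields $O((\wcost + \alpha)\log_\alpha n)$ per update. For queries, we still traverse a single root-to-leaf path of length $O(\alpha\log_\alpha n)$ and write $k$ output intervals, giving $O(\wcost k + \alpha\log_\alpha n)$.

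For the priority search tree, the plan is to reuse the same rebalancing bound but to rely on the modified construction in which only \critical{} nodes carry points. This ensures that each insertion/deletion promotes or demotes points only along the chain of \critical{} ancestors, a sequence of length $O(\log_\alpha n)$, so augmentation maintenance costs $O(\wcost\log_\alpha n)$ writes on top of $O(\alpha\log_\alpha n)$ reads for tree traversal. Deletions are handled by leaving a dummy at the position of the last promoted point and triggering full reconstruction after a constant-fraction of dummy nodes accumulate, whose amortized cost is again $O(\wcost)$ per update by Theorem~\ref{thm:treebuild}. The query cost is identical to the interval-tree case since we traverse one tree path and emit $k$ points.

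For the 2D range tree, inner trees are maintained only at the $O(\log_\alpha n)$ \critical{} nodes along each update path, as described in Section~\ref{sec:aug-maintain}. Each such inner-tree update costs $O(\log n + \wcost)$, so the augmentation contributes $O((\log n + \wcost)\log_\alpha n)$ work, while the outer-tree rebalancing still contributes $O((\wcost + \alpha)\log_\alpha n)$; the total simplifies to $O((\alpha\log n + \wcost)\log_\alpha n)$. A query descends the outer tree reading $O(\alpha\log_\alpha n)$ nodes and performs a 1D range search in $O(\alpha\log_\alpha n)$ inner trees, each costing $O(\log n)$, plus $O(\wcost k)$ writes for the output, giving $O(\wcost k + \alpha\log_\alpha n\log n)$. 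The amortized cost of full reconstructions at the root (triggered every $\Omega(n)$ operations) is $O(\wcost\log_\alpha n)$ per update by Section~\ref{sec:presort}, a lower-order term.

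The main obstacle I expect is the priority tree case: verifying that restricting points to \critical{} nodes still yields a correct structure under both insertions (which may cascade downward when a new high-priority point displaces existing ones) and deletions (which cascade upward via promotions), while keeping the number of modified \critical{} nodes bounded by $O(\log_\alpha n)$. The key lemma to check is that the dummy-node reconstruction rule keeps both tree weights and the number of points per \critical{} node within a constant factor of the invariants used in Lemma~\ref{lemma:critical-dynamic}, so that the $O((\wcost+\alpha)\log_\alpha n)$ bound is not disturbed.
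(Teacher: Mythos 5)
Your proposal takes essentially the same route as the paper: decomposing each update into the rebalancing cost from Section~\ref{sec:labelingcost} (via Corollary~\ref{cor:treedepth} and Theorem~\ref{thm:treebuild}) plus the augmentation-maintenance cost from Section~\ref{sec:aug-maintain}, and handling each tree type accordingly. One small accounting slip: for the interval tree you attribute $O(\wcost)$ amortized reconstruction work per update, whereas the paper's amortization charges $O(\wcost)$ per \critical{} node on the insertion path, yielding $O(\wcost\log_\alpha n)$ amortized reconstruction work per update — but since this is absorbed into the same final bound $O((\wcost+\alpha)\log_\alpha n)$, your conclusion is unaffected.
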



\subsubsection{Bulk Updates}
\label{sec:bulk}

One of the benefits of our reconstruction-based approach is that, bulk updates on these augmented trees can be supported directly.
In the case we change the inner trees of interval and range trees as treaps.
For a treap of size $n$ and a bulk update of size $m$, the expected cost of inserting or deleting this bulk is $O(\wcost{}m+m\log(n/m))$ using treaps~\cite{Gu2018}, and the depth is $O(\log m\log n)$ \whp{}~\cite{blelloch2016just,Sun2016PAM,sun2018parallel}.
We note that there are data structures supporting bulk updates in logarithmic expected depth~\cite{blelloch1999pipelining,AS16}, but we are not sure how to make them write-efficient.

Again deletions are trivial.
For interval and range trees, we can just mark all the objects in parallel but apply deletions to inner trees, which requires constant writes per deletion.
For priority tree, we delete can delete points in a top-down manner, using $O((\alpha+\wcost{})\log_\alpha n)$ work per point and $O(\alpha\log_\alpha n\log n)$ depth.
We now sketch an outline on the bulk insertion.

Assume the bulk size is $m$ and less than $n$ since otherwise we can afford to reconstruct the whole augmented tree.
We first sort the bulk using $O(m\log m+\wcost{}{m})$ work and $O(\log^2 n)$ depth.
Then we merge the sorted list into the augmented tree recursively, and check each \critical{} node in the tree in a top-down manner.

At any time and for a \critical{} node, we use binary search to decide the new objects inserted in this subtree.
If the overall size of the subtree and the newly added objects overflows $4\alpha^i-2$, we reconstruct the subtree by first flattening the tree nodes, merging with new nodes, rebuilding the subtree using the algorithm in Section~\ref{sec:presort}, and marking the \critical{} nodes in this subtree.
To guarantee Lemma~\ref{lemma:critical-dynamic}, we do not mark the \critical{} nodes with subtree size greater than or equal to $2\alpha^{i+1}-1$.
By doing so, the proof of Lemma~\ref{lemma:critical-dynamic} still holds.
The whole process takes $O(n')$ operations in $O(\log n')$ depth for a subtree with $n'$ nodes.
Note that at least $O(n')$ nodes are inserted between two consecutive reconstructions so that the cost can be amortized.
Otherwise, we just recursively check all the \critical{} children of this node.
Once there are no objects within one subtree, we stop the recursion in this subtree.

Note that the range of a binary search can be limited by the range of the \critical{} parent.
The overall cost for all binary searches is $O(\alpha m \log (n/m))$~\cite{blelloch2016just} (no writes), and the depth is $O(\alpha\log m\log_\alpha n)$.
In summary, the amortize work for merging $m$ new objects in to the tree is $O(\alpha m \log (n/m)+\wcost{}m\log_\alpha n)$, and the depth is $O(\alpha\log m\log_\alpha n)$.

We now discuss the bulk updates for the augmented values.
Again for interval trees and range trees, we can just merge all inserted objects into the corresponding inner trees.
Using treaps as the inner trees, merging $m'$ objects to a search tree with $n'$ takes $O(m'\log (n'/m')+\wcost{}m')$ work and $O(\wcost{}\log m'\log n')$ depth.
As a result, for interval and range trees, the work per object in the bulk updates is always no more than the single insertion, and the depth is polylogarithmic for any bulk size.

The bulk update for priority trees is similar to the constructions.
Once there exists an inserted point with higher priority than the root node, we replace root node with this inserted node, and insert the original root node into the corresponding subtree.
Then we leave a hole in the inserted list and recursively apply this process.
This process terminates at the time either the subtree root overflows, we reach a leaf node, or there are more holes than new objects.
The maintenance of augmentations of priority tree takes $O((\alpha+\wcost{})m\log_\alpha n)$ amortized work and $O(\alpha\log_\alpha^2 n)$ depth.

\hide{

\subsection{The Priority Tournament Tree}
Here we propose a new augmented tree structure addressing the 3-sided range query, which not only cost linear writes in construction but also supports efficient dynamic updates. This data structure is a hybrid of priority tree and tournament tree, and thus we call it the \emph{\tournament{}}. It organizes all points in a search tree ordered by the x-coordinates. Then we assign an \augval{} to each tree node top-down, which is the highest priority in its subtree that does not appear as the augmented value in any of its ancestor(s). In particular, the augmented values in the tree forms exactly the same configuration as the pre-sorted priority tree as introduced in Section \ref{sec:presort}, which is also a heap on y-coordinates. The difference is that the \tournament{} also stores points in internal nodes, instead of just the replicates of the leaves as in the pre-sorted priority tree.

\paragraph{\textbf{Construction.}}
The pseudocode of constructing a \tournament{} is shown in Algorithm \ref{alg:tournamentbuild}. We first sort all points in $p$ by x-coordinate (Line \ref{line:tournamentsort}). This is also the order maintained by the search tree.
We use the middle point ($p_{n/2}$) as the root (Line \ref{line:tournamentroot}), such that the left and right subtree must be weight-balanced\footnote{Using similar method the tree can also be maintained valid as other balancing schemes such as AVL trees, red-black trees, using the \texttt{join} function\cite{blelloch2016just}.}.
Then we find the highest priority $m$ in $p$ and its corresponding index $k$ (Line \ref{line:tournamentmax}). This means that the \augval{} of the root is $m$. We then change the priority of $p_k$ as $-\infty$ (Line \ref{line:tournamentreset}) such that it will never appear as the \augval{} in any lower levels. We then recursively build the left subtree on all points on the left of $p_{n/2}$ and the right subtree on all points on the right (Line \ref{line:tournamentrecursive}). In this process, since we may need to reset the priority to $-\infty$ for some points to guarantee correctness, we copy the priority of each node in advance (Line \ref{line:tournamentcopy}).

In each invocation to \func{FromSorted}, the \func{ParMax} costs $O(n)$ work and $O(\log n)$ depth. All the other operations only have constant work.
There are in total $\log n$ levels of recursion. In all, the construction has $O(n)$ work, $O(n)$ space, $O(\log^2 n)$ depth and $O(n)$ writes.

In the construction process, we can also easily record the location (the tree node) from which each \augval{} comes, as well as the path to track it (e.g., we can store the x-coordinate of the corresponding point, and it can be easily located in the search tree). This is important to guarantee the efficiency of updates. For simplicity, we do not show this process in the code in Algorithm \ref{alg:tournamentbuild}.

\begin{algorithm}[t]
\caption{\func{BuildPTTree}$(A,n)$}
\label{alg:tournamentbuild}
    \SetKwFor{ParForEach}{parallel foreach}{do}{endfch}
\SetKwInOut{Note}{Note}
\Note{``$||$'' means that two recursive calls can run in parallel. $\langle m,k\rangle=$\func{ParMax}$(p,f)$ is a parallel function returning the maximum value in $p$ (according to comparison function $f$) as $m$, and the corresponding index as $k$. \func{Node}$(p)$ returns a new tree node of entry $p$.}

\KwIn{An array $p = \{(x_1,y_1),\dots,(x_n,y_n)\}$ of all points ($\in P$).}
\KwOut{The root of the \tournament{} containing all points in $A$.}%
$f(p_1,p_2):(P\times P\mapsto \bool) = (p_1.x<p_2.x)$;\\
$p=$\func{Sort}$(p,f)$\\
\label{line:tournamentsort}
$p'=$ an array of elements of type $P'=X\times Y\times Y$\\
\ParForEach{$(x_i,y_i)\in A$} {
\label{line:tournamentcopy}
$p'[i].x=x_i,p'[i].y=y_i,p'[i].y'=y_i$
}
\Return {\func{FromSorted}$(A)$}\\
  \vspace{.5em}
\SetKwProg{myfunc}{function}{}{}
\myfunc{\func{FromSorted}$(A,0,n)$} {
$f:(P'\times P'\mapsto \bool) = (p_1.y'<p_2.y')$;\\
$\langle m,k\rangle = $ParMax$(A, f)$;\\
\label{line:tournamentmax}
$r=$\func{Node}$(A[n/2])$;\\
\label{line:tournamentroot}
$r.$\func{aug}$=m$; $A[k].y'=-\infty$;\\
\label{line:tournamentreset}
$r.$\func{left}$=$\func{FromSorted}$(A,0,n/2-1)$ $||$ $r.$\func{right}$=$\func{FromSorted}$(A,n/2+1,n)$\\
\label{line:tournamentrecursive}
\Return {$r$}
}
\end{algorithm}

\paragraph{\textbf{Query.}} To answer the 3-sided query $(x_L,x_R,y_B)$, we first search $(x_L,x_R)$ as a range in the search tree, finding all related nodes and subtrees (at most $O(\log n)$ each). For each such nodes, we check the entry stored in it, as well as the point that contributes to its augmented value. Here we use ``check'' to mean determining if a point is in the 3-sided query window. We need to be careful that if the priority stored in the entry is higher than the augmented value, we \emph{do not} report it because it should have been reported on higher levels. For each related subtrees, as mentioned, the augmented values in the tree structure form a heap. Because all the nodes in the subtree must lie in the query range on x-coordinate, we can simply check all nodes with augmented values higher than $y_B$. We check both the entry in such nodes and the point that contributes to its augmented value. Whenever we visit a node with the augmented value less than $y_B$, we skip the whole subtree. 
Assume that the output size is $k$. The number of all access to the ``heap'' nodes is no more than $4k$ (all the output nodes, the ones that store their priorities as augmented values, and their two children). The number of all single nodes that need to be checked is at most $O(\log n)$. Thus the query can be answered in $O(k+\log n)$ work.

\begin{figure}
  \includegraphics[width=0.8\columnwidth]{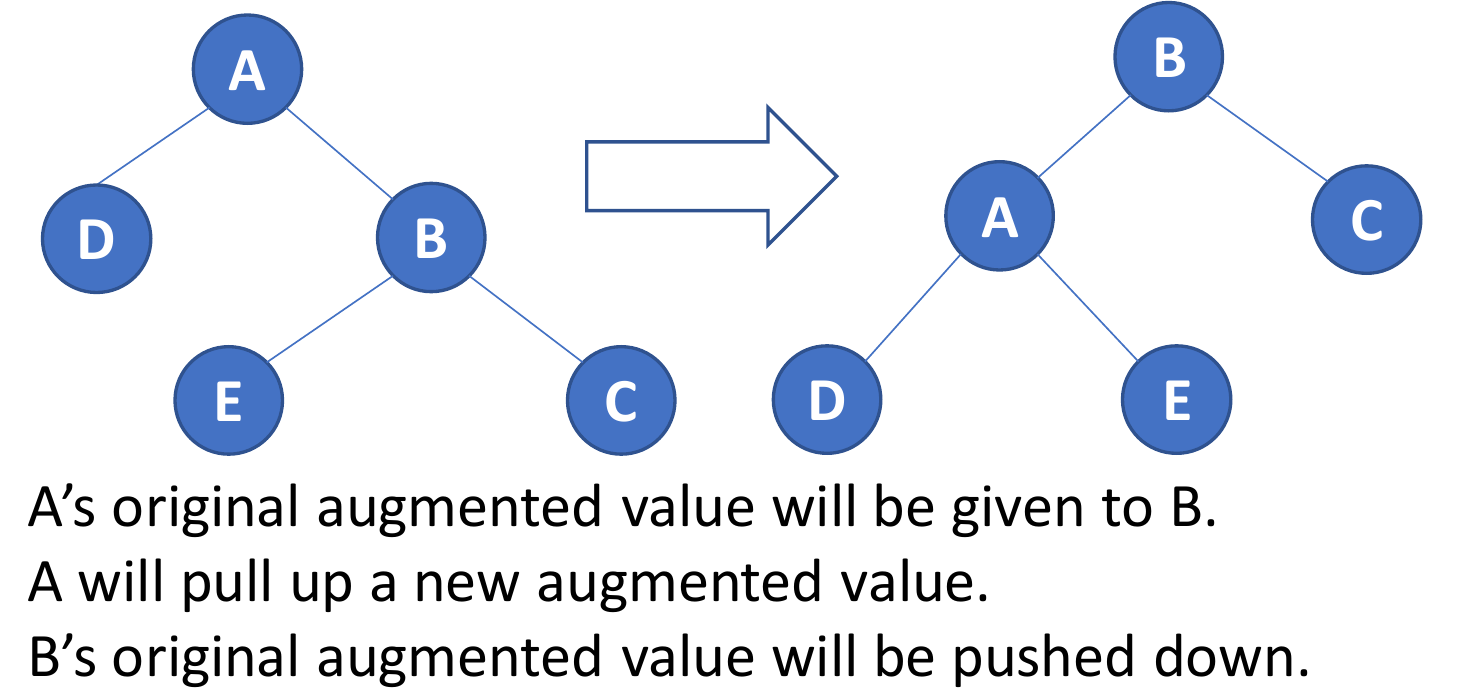}\\
  \caption{Single rotation illustration.}\label{fig:rotation}
\end{figure}

\paragraph{\textbf{Update.}} If no rotation is required, we only need update the augmented value on the insertion path, which costs $O(\log n)$ writes. When an imbalance occurs then, we use any standard insertion algorithm of the corresponding balancing scheme using rotations.
In this case, the augmented values of most of the nodes involved in the rotations need to be updated.
Here we give an algorithm for a single-rotation, and double rotations will be viewed as two consecutive single-rotations. We need two functions. One is to pull up an \augval{} of a certain node from its descendants (assigning a new \augval{} to some node). The other is to push the \augval{} of the current node down to one of its descendants (after a node's \augval{} is updated, its original \augval{} should be ``given back'' to some of its descendants). When a rotation occurs, e.g., as shown in Figure \ref{fig:rotation}, there are three steps. First, $B$'s new \augval{} should be directly from $A$'s old \augval{}. In this case $B$'s old \augval{} is now eligible to appear in some lower levels in the subtrees rooted at $E$ or $C$. Thus we push $B$'s \augval{} down. Then we set $B$'s new \augval{} to $A$'s old \augval{} because they represent the same subtree. Finally, now $A$ has new children $D$ and $E$, so it need to pull up a new augmented value from one of its new descendants.

To pull up an \augval{}, we compare the priority in the node itself and the \augval{}s in its two children. If the highest one $m$ is the one from a child, we set the current \augval{} to $m$, and recursively pull up a new \augval{} for this child.
To push down an \augval{} $a$, we go to the branch where the priority comes from, set the corresponding child's \augval{} to be $a$, and recursively push this child's old \augval{} down. Both functions cost $O(\log n)$, which is also the asymptotic cost of a single-rotation.

For a weight-balanced tree or red-black tree, the amortized number of rotations per update is constant (do we need a lemma?).  Thus the amortized work (and also the number of writes) of one update to the \tournament{} is $O(\log n)$. In total the cost is $O(\omega\log n)$

By using the $(\alpha,2\alpha)$-WBB tree, the read cost in query or any searching increases by a $O(\frac{\alpha}{\log \alpha})$ factor, as a payoff we can make the amortized number of write per update $\frac{\log n}{\log \alpha}$. By choosing $\alpha = O(\omega)$, we save a $O(\log \omega)$ factor per update at the cost of slightly increasing the query cost.

\subsection{The Dynamic Range Tree}
In this section we discuss how to make update on a range tree (as defined in Section \ref{sec:treeprelim}) write-efficient. We first analyze the cost on a standard range tree as defined in Section \ref{sec:treeprelim}. We use red-black tree as the inner tree. It has the property that each update only takes amortized constant number of rotations, and the amortized number of nodes that need to change the balancing information (color and height) is also a constant~\cite{??}.
This means that only a constant number of writes is required for each inner tree per update.
To this point we have made $O(\log^2 n)$ reads, and $O(\log n)$ writes.
Now the outer tree is also valid as a range tree, but may be unbalanced. To re-balance the tree we also employ corresponding rebalancing algorithm of the outer balancing scheme, but whenever a rotation occurs, we re-compute the augmented value (the inner tree) of the involved nodes, taking $O(n)$ writes (also reads) for a subtree of size $n$. Meanwhile, for a subtree with size $n$, imbalance occur (in the worst case) every $O(n)$ updates, so the amortized cost on each level is a constant. We accumulate the cost across all levels, and the total (amortized) number of rotations of a single update is $O(\log n)$. In all the amortized cost of a single update is $O(\log^2 n + \wcost \log n)$.

To reduce the number of writes per update, we use a $(\alpha,2\alpha)$-WBB tree as the outer tree. By doing this the searching time on the outer tree change from $O(\log n)$ to $O(\alpha \frac{\log n}{\log \alpha})$, because the tree height gets shallower (from $O(\log n)$ to $O(\log n/\log \alpha)$), but each visit of a tree node costs $O(\alpha)$ instead of a constant. Thus the query time becomes $\wcost k + \alpha \frac{\log^2 n}{\log \alpha}$. As for updates, the number of reads increase to $O(\omega\frac{\log^2 n}{\log \alpha})$. If not considering the rebalancing on the outer tree, the number of writes is $O(\log n/\log \alpha)$ (constant cost on each inner tree, and $O(\log n/\log \alpha)$ inner trees involved). Each $(\alpha,2\alpha)$-WBB subtree gets unbalanced no less than every $O(n)$ updates. Recomputing the augmented value on such a subtree takes $O(n)$ time. Using similar analysis as the binary weight-balanced tree, the amortized cost per update is proportional to the tree height, which is $O(\log n/\log \alpha)$. In all the work of an update is $O\left(\frac{\log n}{\log \alpha}\left( \alpha \log n+\wcost \right)\right)$. When $\wcost > \log n$, we choose $\alpha = \wcost/\log n$, otherwise we choose $\alpha=2$, which make the tree just a binary weight balanced tree.

\hide{\paragraph{Persistent tree using path-copying.} When implemented with path-copying, the tree structure can also be fully \emph{persistent} (or \emph{functional}), which means that any updates will not modify existing tree but create a new version, with no extra work asymptotically. }

}

\section{Conclusions}

In this paper, we introduced new algorithms and data structures for computational geometry problems, including comparison sort, planar Delaunay triangulation, $k$-d trees, and static and dynamic augmented trees.
All of our algorithms, except for dynamic updates for augmented trees, are asymptotically optimal in terms of the number of arithmetic operations and writes to the large asymmetric memory, and have polylogarithmic depth. 

We introduced two frameworks for designing write-efficient parallel
algorithms.  The first one is for randomized incremental algorithms,
and combines DAG tracing and prefix doubling so that multiple objects
can be processed in parallel in a write-efficient manner.  The second
one is designed for augmented weight-balanced binary search trees,
where for dynamic insertions and deletions, we can reduce the
amortized number of writes compared to the standard data structure.
We believe that these techniques can be used for designing other
write-efficient algorithms.

\section*{Acknowledgments}
This work was supported in part by NSF grants CCF-1408940, CCF-1533858, and CCF-1629444.

\bibliographystyle{abbrv}


\appendix

\section{Additional Details for Augmented Trees}\label{sec:aug-app}

In this section we provide omitted details for the write-efficient augmented trees introduced in Section~\ref{sec:augtree}.

\myparagraph{The tournament tree for constructing priority trees}
In Section~\ref{sec:presort}, we mentioned that a tournament tree on a list is used to support the RangeMin and the $k$-th element in a range, and at the meantime an element can be removed.
Each interior tree node maintains the minimum element and the number of valid nodes within the subtree.
We now show that given the tree of size $n$, answering all queries in construction uses linear reads and writes.

For a RangeMin query on $(x,y)$, we start from the left corresponding to $x$, keep going up on the tree until the node that its subtree contains $y$, and traverse the tree to find $y$.
In this process, we use the maintained values in the tree nodes to update the RangeMin when the corresponding ranges of the subtrees are within $(x,y)$.
The reads of such a query is $O(\log (y-x+1))$.
We can query the $k$-th element in a range similarly.

Since the tree is fully balanced, the tree height is $\log_2 n$.
In the $i$-th level (the root is the first level), there are $O(2^i)$ queries and the sum of the query ranges is $O(n)$.
The overall query cost on one level is maximized when all the query ranges are the same, which is $O(2^i\cdot \log(n/2^i))$.
The overall cost across all levels is $\sum_{i=1}^{\log_2 n}O(2^i\cdot \log(n/2^i))=O(n)$.

Deleting an element na\"ively in a tournament tree costs $O(\log n)$ writes in the worst case.
Our observation is that, once we delete an element of a tree node corresponding to a range $(x,y)$, we know that all further queries are either entirely within $(x,y)$ or disjoint $(x,y)$.
We therefore only update the ancestors of the deleted nodes whose range is within $(x,y)$, and there are at most $O(\log(y-x+1))$ of such ancestors.
The overall writes required in all deletions has the same form as the overall reads in the queries, which is $O(n)$.

\myparagraph{Ordering the nodes within the subtree of a priority tree}
Since our priority tree is not a search tree by default, we need an extra step to obtain the ordering of the points in a subtree.
This can be trivially achieved by inserting the points in the \critical{} nodes into their subtrees in a bottom-up manner (without balancing the tree).
Inserting an object into a tree requires $O(1)$ writes, so the overall writes are linear.
By Corollary~\ref{cor:treedepth}, the subtree depth is $O(\alpha\log_\alpha m)$ for a subtree of size $m$.
After the all insertions, the tree depth can be increased by at most $O(\log_\alpha m)$.
For an critical node $A$ such that $2\alpha^i-1\le |A|\le 4\alpha^i-2$ for some integer $i$, the number of reads required to find the leaf node is proportional to the tree depth, $O(\alpha i)$.
By Lemma~\ref{lemma:critical-dynamic}, the number \critical{} nodes with the same $i$ decreases geometrically with the increasing of $i$, the overall reads is asymptotically bounded by the level where $i=1$.
The overall number of reads is therefore $O(\alpha m)$.
The total cost to get the ordering is $O((\wcost+\alpha)m)$ for a subtree of size $m$.

\myparagraph{Range tree construction based on \labeling{}}
When building a range tree based on \labeling{}, the we skip the construction of the inner trees for the \secondary{} nodes.
Note that whether a tree node is \critical{} can be checked once it is created.
For a \critical{} node, we can get the inner tree nodes by applying an ordered filter of its \critical{} parent's inner tree.
By Lemma~\ref{lemma:critical-static} and the result in~\cite{BBFGGMS16}, this step costs $O((\alpha+\wcost{})s)$ where $s$ is the inner tree size.
As shown in Section~\ref{sec:aug-dyn}, the overall inner tree size is $O(n\log_\alpha n)$, so the cost to generate them is $O((\alpha+\wcost{})\, n\log_\alpha n)$.

\myparagraph{Other queries on our augmented trees}
In this paper we mainly focused on the queries on reporting a full list of all queried elements.
Indeed, many other similar queries can be handled with a variant of our structures. 
For example, counting or weighted sum queries on interval trees and range trees can be answered by augmenting the inner trees with the count or weighted sum of all elements in the subtree, possibly with the \labeling{}.

\end{document}